% SIAM Article Template
\documentclass[onefignum,onetabnum]{siamart171218}

% Information that is shared between the article and the supplement
% (title and author information, macros, packages, etc.) goes into
% ex_shared.tex. If there is no supplement, this file can be included
% directly.

% SIAM Shared Information Template
% This is information that is shared between the main document and any
% supplement. If no supplement is required, then this information can
% be included directly in the main document.

% Packages and macros go here
\usepackage{lipsum}
\usepackage{amsfonts}
\usepackage{graphicx}
\usepackage{epstopdf}
\usepackage{algorithmic}
\usepackage{multirow}
\usepackage{float}
\usepackage{epstopdf}
\usepackage{appendix}
\usepackage{bm}
\usepackage{dsfont}
\usepackage{amsmath}
\usepackage{amsfonts}
\usepackage{amssymb}
\usepackage{subcaption}
\usepackage{caption}
\usepackage{enumitem}
\usepackage{array}
\ifpdf
  \DeclareGraphicsExtensions{.eps,.pdf,.png,.jpg}
\else
  \DeclareGraphicsExtensions{.eps}
\fi

% Add a serial/Oxford comma by default.

% Used for creating new theorem and remark environments
\newsiamremark{remark}{Remark}
\newsiamremark{hypothesis}{Hypothesis}
\newsiamremark{example}{Example}
\crefname{hypothesis}{Hypothesis}{Hypotheses}
\newsiamthm{claim}{Claim}
\newsiamthm{assumption}{Assumption}

% Sets running headers as well as PDF title and authors
\headers{Robust Optimization for Sequential Decision Making}{Z. Chen, P. Yu, AND W. B. Haskell}

% Title. If the supplement option is on, then "Supplementary Material"
% is automatically inserted before the title.
\title{Distributionally Robust Optimization for Sequential Decision Making\thanks{Submitted to the editors on \today.
\funding{This work was supported by the Ministry of Education of Singapore through grant MOE2014-T2-1-066.}}}

% Authors: full names plus addresses.
\author{Zhi Chen\thanks{Imperial College Business School, Imperial College London 
  (\email{zhi.chen@imperial.ac.uk}).}
\and Pengqian Yu\thanks{Neuri Pte Ltd 
  (\email{pengqian@neuri.ai}).}
\and William B. Haskell\thanks{Department of Industrial Systems Engineering and Management, National University of Singapore}
  (\email{isehwb@nus.edu.sg}).}

\usepackage{amsopn}

%%%%%%%% usage of the package {bm} %%%%%%%%
\newcommand{\bmt}[1]{\tilde{\bm{#1}}}

\newcommand{\bmb}[1]{\bar{\bm{#1}}}

\def\n{\tilde{n}}

%%%%%%%% expectation and probability %%%%%%%%
\newcommand{\ep}[1]{\mathbb{E}_{\mathbb{P}}\left[#1\right]}

\newcommand{\epbar}[1]{\mathbb{E}_{\bar{\mathbb{P}}}\left[#1\right]}

\newcommand{\epn}[1]{\mathbb{E}_{\mathbb{P}_n}\left[#1\right]}

\newcommand{\pp}[1]{\mathbb{P}\left[#1\right]}

\newcommand{\ppn}[1]{\mathbb{P}_n\left[#1\right]}

%%%%%%%% indicator function, need the package ''dsfont'' %%%%%%%%

%%%%%%%%% closure & convex hull & relative interior %%%%%%%%
%\newcommand{\cls}[1]{{\rm{cl}\left(#1\right)}}
%\newcommand{\cvxh}[1]{{\rm{CH}\left(#1\right)}}
%\newcommand{\relint}[1]{{\rm{ri}\left(#1\right)}}

%%%%%%%% vectorization, argsup & argmin %%%%%%%%

\DeclareMathOperator*{\arginf}{\arg\!\inf}
\DeclareMathOperator*{\argmax}{\arg\!\max}
%%% Local Variables: 
%%% mode:latex
%%% TeX-master: "ex_article"
%%% End: 

% Optional PDF information
\ifpdf
\hypersetup{
  pdftitle={Distributionally Robust Optimization for Sequential Decision Making},
  pdfauthor={Z. Chen, P. Yu, AND W. B. Haskell}
}
\fi

% The next statement enables references to information in the
% supplement. See the xr-hyperref package for details.

% \externaldocument{ex_supplement}

% FundRef data to be entered by SIAM
%<funding-group>
%<award-group>
%<funding-source>
%<named-content content-type="funder-name"> 
%</named-content> 
%<named-content content-type="funder-identifier"> 
%</named-content>
%</funding-source>
%<award-id> </award-id>
%</award-group>
%</funding-group>

\begin{document}

\maketitle

% REQUIRED
\begin{abstract}
The distributionally robust Markov Decision Process (MDP) approach asks for a distributionally robust policy that achieves the maximal expected total reward under the most adversarial distribution of uncertain parameters. In this paper, we study distributionally robust MDPs where ambiguity sets for the uncertain parameters are of a format that can easily incorporate in its description the uncertainty's generalized moment as well as statistical distance information. In this way, we generalize existing works on distributionally robust MDP with generalized-moment-based and statistical-distance-based ambiguity sets to incorporate information from the former class such as moments and dispersions to the latter class that critically depends on empirical observations of the uncertain parameters. We show that, under this format of ambiguity sets, the resulting distributionally robust MDP remains tractable under mild technical conditions. To be more specific, a distributionally robust policy can be constructed by solving a sequence of one-stage convex optimization subproblems.
\end{abstract}

% REQUIRED
\begin{keywords}
  Markov decision process, distributionally robust optimization, ambiguity set.
\end{keywords}

% REQUIRED
\begin{AMS}
  65K10, 90C40, 49N15, 90C39, 90C47
\end{AMS}

\section{Introduction}

Sequential decision making in stochastic dynamic environments, also called the ``planning problem", is often modeled using a Markov Decision Process (MDP, cf. \cite{Bertsekas_2000, Puterman_2014}). A strategy that achieves maximal expected total reward is considered optimal.  In practice, parameter uncertainty---the deviation of the model parameters (rewards and transition probabilities)  from the true ones---often causes the performance of ``optimal'' policies to degrade significantly (see experiments in \cite{Mannor_2007}). Many efforts have been made to reduce such performance variation under the robust MDP framework (e.g., \cite{Iyengar_2005, Nilim_Ghaoui_2005, White_1994, Wiesemann_Kuhn_Rustem_2013}). In this context, it is assumed that the parameters can be any member of a known set---termed the uncertainty set---and solutions are ranked based on their performance under the worst (i.e., most adversarial) parameter realizations.

New models on parameter uncertainty have been inspired by recent advances in the emerging field of distributionally robust optimization, which seeks for a maximal worst-case expected performance with reference to an ambiguity set---a family of distributions that identically share the given a priori information on the uncertainty. A typical approach for constructing the ambiguity set specifies support information and generalized moment conditions on parameter uncertainty (see e.g., \cite{Delage_Ye_2010, Goh_Sim_2010, Wiesemann_Kuhn_Sim_2014}, and references therein). For example, pioneer works in robust MDPs (see e.g., \cite{Iyengar_2005, Nilim_Ghaoui_2005, Wiesemann_Kuhn_Rustem_2013}) consider parameter uncertainty such that only their support is known. The work \cite{Xu_Mannor_2012} studies parameter uncertainty that lies in a collection of nested subsets of the support subject to different confidence levels. More recently, distributionally robust MDPs where the ambiguity set of the parameter uncertainty is a form proposed by \cite{Wiesemann_Kuhn_Sim_2014} have been examined in \cite{Yu_Xu_2016}.

Another approach for constructing the ambiguity set has also attracted considerable interest, which characterizes distributions by their proximity to a reference distribution via certain statistical distance measure. By varying the statistical distance measure, examples include the $\phi$-divergence ambiguity set (see e.g., \cite{Ben_Hertog_Waegenaere_Melenberg_Rennen_2013, Jiang_Guan_2016, Wang_Glynn_Ye_2016}) and Wasserstein ambiguity set (see e.g., \cite{Gao_Kleywegt_2016, Esfahani_Kuhn_2015, Pflug_Wozabal_2007, Zhao_Guan_2015}).  A soft-constrained version of distributionally robust MDPs with a particular $\phi$-divergence ambiguity set (to be more precise, the Kullback-Leibler-divergence ambiguity set) has been studied in \cite{Osogami_2012}, and it is shown to be equivalent to a risk-sensitive MDP that considers the expected exponential utility. However, except that, distributionally robust MDPs with other $\phi$-divergence ambiguity sets---for example, the likelihood robust ambiguity set in \cite{Wang_Glynn_Ye_2016}---have not been studied. Distributionally robust MDPs where the Wasserstein distance is used to specify an ambiguity set have appeared in \cite{Yang_2017}. 

The two mentioned approaches are usually separately studied in the literature of distributionally robust optimization. Each of them has their own advantages. For instance, a larger set of historical data will provide better estimations about inputs for a generalized-moment-based ambiguity set without increasing the size of the corresponding distributionally robust optimization problem, and a statistical-distance-based ambiguity set could yield solutions that have good out-of-sample performance in terms of variability and disappointment (see e.g., \cite{Gotoh_Kim_Lim_2015, Lam_2016, Esfahani_Kuhn_2015, Van_Esfahani_Kuhn_2017}). Recently, \cite{Chen_Sim_Xiong_2017} propose a generic format of ambiguity sets for distributionally robust optimization models. That format can express ambiguity sets constructed from the two approaches in a unified manner. Inspired by that progress, we study distributionally robust MDPs with respect to ambiguity sets in such a format. Our work closely relates to \cite{Yu_Xu_2016} for distributionally robust MDPs. However, we generalize the results therein in a sense that there are several new examples (including the popular Wasserstein ambiguity set as well as a new class of ambiguity sets based on mixture distributions) in the format of ambiguity set we consider. We summarize the contributions of our paper as follows.
\begin{enumerate}
\item
We study distributionally robust MDPs with a general format of ambiguity sets that can express the two notable classes of ambiguity sets that are respectively based on generalized moment and statistical distance in a unified manner. In this way, we generalize existing works on distributionally robust MDPs that usually investigate these two classes separately. 
\item 
A notable result following from the unified format is that we are able to obtain the $\mathcal{S}$-robust strategy for finite-stage distributionally robust MDPs with the Wasserstein ambiguity set. Our approach solves classical robust optimization problems and differs from \cite{Yang_2017} that solves a general convex program for the $\mathcal{S}$-robust strategy. In this regard, our approach may be more friendly to practitioners because classical robust optimization problems nowadays can be specified in an efficient yet intuitive way by using algebraic modeling packages, which will automatically derive compact mathematical reformulations of robust optimization problems and pass them to the off-the-shelf commercial solvers. 
\item 
Our analyses naturally apply to distributionally robust MDPs with respect to tailored ambiguity sets that are a hybridization of a generalized-moment-based ambiguity set and a statistical-distance-based ambiguity set, such as a hybridization of the ambiguity set in \cite{Yu_Xu_2016} and that in \cite{Yang_2017} and a hybridization of the mean-covariance ambiguity and a likelihood robust ambiguity set (as suggested in \cite{Wang_Glynn_Ye_2016}). The hybridization can leverage the benefits from both ambiguity sets by encoding structure information, which is typically modeled by generalized moment constraints, in statistical-distance-based ambiguity sets that largely depend on data and hence would otherwise ignore any prior knowledge about structure information. To the best of our knowledge, this type of ambiguity sets has not been considered in distributionally robust MDPs. We hope our framework would encourage further researches along this line.
\end{enumerate}

This paper is organized as follows. In \cref{section2} we provide backgrounds and assumptions on classical and distributionally robust MDPs and we present several examples of ambiguity sets in the format we study in this paper. We formulate and solve distributionally robust MDPs for the finite horizon case in \cref{section3} and we expand our analysis to the infinite horizon case in \cref{section5}. In \cref{section6}, we conduct numerical experiments to verify the validity and effectiveness of our proposed approach. Some concluding remarks are offered in \cref{section7}.

\noindent \textbf{Notations.}
Throughout the paper, we use boldface uppercase and lowercase characters to denote (respectively) matrices and vectors.
Special vectors of the appropriate dimension include $\bm{0}$, $\bm{e}$, and $\bm{e}_i$, which respectively correspond to the vector of $0$s, the vector of $1$s, and the $i^{\rm th}$ unit standard basis. 
We denote by $ [N] = \{1,2,\ldots,N\} $ the set of positive running indices up to $ N $. 
We denote the set of all Borel probability distributions on a set $ \mathcal{A} \in \mathbb{R}^N $ by $ \mathcal{P}_0(\mathcal{A}) $.  
A random vector $ \bmt{z} $ is denoted with a tilde sign and we use $ \bmt{z} \sim \mathbb{P} $ to denote that $\bmt{z} $ is governed by a  probability distribution $ \mathbb{P} $. Given a probability distribution $ \mathbb{P} $, we use $ \ep{\cdot} $ and $ \pp{\cdot} $ to denote the corresponding expectation and probability. We say a convex set is tractable if its set membership can be described by finitely many convex constraints and, potentially, auxiliary variables. Similarly, a convex function is tractable if its epigraph is.

\section{Preliminaries}\label{section2}
%%%%%%%%%%%%%%%%%
%%%%%%%%%%%%%%%%%
In this section, we provide backgrounds and assumptions on classical and distributionally robust MDPs. We also discuss in details ambiguity sets for the uncertain parameters---one of the most key ingredients in distributionally robust MDPs.
\subsection{Classical Markov Decision Processes}
A (finite) MDP is defined as a 6-tuple $\langle \mathcal{S},\mathcal{A},T,\gamma,\bm{p},\bm{r}\rangle$. Both the state space $\mathcal{S}$ and the action space $\mathcal{A}$ are finite. The decision horizon $T$ is possibly infinite and the discount factor $\gamma\in(0,1)$. Given the planning horizon $T$, the state and action at time $t=1,\dots,T$ are denoted by $s_t$ and $a_t$, respectively. The parameter $\bm{p}$ and  $\bm{r}$ are the transition probability and the expected reward, i.e., for a state $s$ and an action $a$, $r(s,a)$ is the nonnegative and bounded expected reward and $p(s'|s,a)$ is the probability of visiting the next state $s'$. We use subscript $s$ to denote the value associated with the state $s$, e.g., $\mathcal{A}_s$ is the set of actions in state $s$ and $\mathcal{A} \triangleq \bigcup_{s\in\mathcal{S}}\mathcal{A}_s$, $\bm{r}_s\triangleq (r(s,a))_{a \in \mathcal{A}_s}$ denotes the vector form of the rewards associated with the state $s$, $\bm{\pi}_s \triangleq (\pi_s(a))_{a \in \mathcal{A}_s} \in  \mathcal{P}(\mathcal{A}_s) $ specifies the probabilities that the actions chosen at state $s$ for a strategy $\bm{\pi}$. The elements in the vector $\bm{p}_s\triangleq (p(s'|s,a))_{s'\in\mathcal{S},a\in \mathcal{A}_s}$ are listed as follows: the transition probabilities of the same action are arranged in the same block and inside each block they are listed according to the order of the next state. Finally, we have $\bm{p} = (\bm{p}_s)_{s \in \mathcal{S}}$ and $\bm{r} = (\bm{r}_s)_{s\in \mathcal{S}}$.

Let $\mathcal{H}_t$ be the set of histories at time $t$, given by $\mathcal{H}
_1\triangleq\mathcal{S}$ and $\mathcal{H}_t\triangleq(\mathcal{S}\times\mathcal{A})^t\times\mathcal{S}$ for all $t\geq2 $.
A history dependent randomized decision rule is a mapping $d_t:\mathcal{H}_t \mapsto  \mathcal{P}(\mathcal{A})$, where $  \mathcal{P}(\mathcal{A}) $ is the probability simplex on the set of actions $ \mathcal{A} $. A Markovian randomized decision rule is a mapping $d_t:\mathcal{S} \mapsto  \mathcal{P}(\mathcal{A})$. A deterministic decision rule $d_t:\mathcal{S}\mapsto\mathcal{A}$ can be regarded as a special case of a randomized decision rule in which the probability distribution on the set of actions is degenerate. The sets of history dependent randomized decision rules, Markovian randomized decision rules and Markovian deterministic decision rules are denoted by $ \mathcal{R}^\text{HR}_t $, $ \mathcal{R}^\text{MR}_t $ and $ \mathcal{R}^\text{MD}_t $, respectively. A policy, or a strategy is a sequence of decision rules for the entire time horizon, i.e., $\bm{\pi}=(d_1,d_2,\dots,d_{T-1})$ where $d_t\in \mathcal{R}_t^\text{K}$ and K designates a class of decision rules (K=HR, MR, MD). We denote the set of all policies of class K by $\Pi^\text{K} \triangleq \mathcal{R}_1^\text{K}\times \mathcal{R}_2^\text{K}\times\cdots \times\mathcal{R}_{T-1}^\text{K}$.

Given an initial state $s_1\in\mathcal{S}$, a policy $\bm{\pi} \in \Pi^\text{K}$ determines a probability measure $\mathbb{Q}({\bm{\pi}})$ on the canonical measurable space of MDP trajectories and a corresponding stochastic process $\{(s_t,a_t)\}_{t\geq1}$. The expectation operator with respect to $\mathbb{Q}({\bm{\pi}})$ is denoted by $\mathbb{E}_{\mathbb{Q}(\bm{\pi})}$, and the expected (discounted) total-reward under parameters pair $(\bm{p}, \bm{r})$ can then be expressed as
$$
u(\bm{\pi},\bm{p},\bm{r},s_1)\triangleq \mathbb{E}_{\mathbb{Q}(\bm{\pi})}\left[\sum_{t=1}^{T}\gamma^{t-1}r(s_t,a_t)\right].
$$
The goal of the classical MDPs is to find an optimal strategy that solves the problem
\begin{equation}\label{classical MDP}
\max_{\bm{\pi} \in \Pi^{\text{HR}}} u(\bm{\pi},\bm{p},\bm{r},s_1);
\end{equation}
here, the parameters $ \bm{p} $ and $ \bm{r} $ are fixed and are known to the decision maker. It is well known that problem~\cref{classical MDP} has an optimal policy in $\Pi^{\text{MD}}$ and can be solved by value iteration, policy iteration, and linear programming (cf. \cite{Puterman_2014}). 

\subsection{Distributionally Robust Markov Decision Processes}
A distributionally robust MDP is defined as a tuple $\langle \mathcal{S}, \mathcal{A}, T, \gamma, \mathcal{F}_{\mathcal{S}}\rangle$. In  distributionally robust MDPs, the transition probability $\bmt{p}$ and the expected reward $\bmt{r}$ are unknown. Instead, they are assumed to obey a joint probability distribution $\mathbb{P}$, which is also unknown but belongs to a known family $\mathcal{F}_{\mathcal{S}}$ of probability distributions called the ambiguity set whose members share some identical distributional information. Given the ambiguity set $ \mathcal{F}_{\mathcal{S}} $ and an initial state $s_1 \in \mathcal{S}$, the distributionally robust MDP solves the following problem 
\begin{equation*}
\max_{\bm{\pi} \in\Pi^{\text{HR}}}\inf_{\mathbb{P} \in \mathcal{F_S}}\mathbb{E}_{\mathbb{P}}\left[u(\bm{\pi}, \bmt{p},\bmt{r},s_1)\right]
\end{equation*}
and asks for a policy $\bm{\pi}$ that attains the best worst-case expected performance under an ambiguous probability distribution $\mathbb{P}$ of $ (\bmt{p}, \bmt{r})$ arising from the set $\mathcal{F}_\mathcal{S}$.

While the distributionally robust MDP framework can be very general, not every ambiguity set $ \mathcal{F}_{\mathcal{S}} $ would result in reformulations that are easy to solve. Therefore, in this paper, we focus on a class of ambiguity sets that are general to encompass several types of ambiguity sets that have attracted considerable interest in the literature, and at same time, can guarantee the computational tractability of the arising distributionally robust MDP.

For the first key requirement of $\mathcal{F}_\mathcal{S}$, we adopt the following set of distributions in our model.
$$
\mathcal{F}_\mathcal{S}\triangleq\left\{\mathbb{P} ~\left\vert~ \mathbb{P} = \bigotimes_{s\in \mathcal{S}}\mathbb{P}_{s}, ~\mathbb{P}_s\in \mathcal{F}_s ~~\forall s\in \mathcal{S} \right.\right\}.
$$
Here for each state $s \in \mathcal{S}$, the set $\mathcal{F}_s$ is a family of probability distributions of uncertain parameters $(\bm{p}_s, \bm{r}_s)$ associated with the state $s$, and it is termed as the ``state-wise ambiguity set". Note that $ \bigotimes_{s\in \mathcal{S}}\mathbb{P}_{s}$ stands for the product measure generated by $\mathbb{P}_{s}$, which indicates that the uncertain parameters across different states are independent. Such a state-wise property is called ``s-rectangularity" in the literature (e.g., \cite{Nilim_Ghaoui_2005,Wiesemann_Kuhn_Rustem_2013}) and it plays an essential role in reducing the distributionally robust MDP to a robust MDP (see \cite{Xu_Mannor_2012}).

As the second key requirement of $ \mathcal{F}_\mathcal{S} $, we assume that the admissible state-wise ambiguity set $ \mathcal{F}_s $ for all $ s \in \mathcal{S} $ can be expressed as the union of marginal distributions of the uncertain parameters $ (\bmt{p}_s, \bmt{r}_s) $ under all joint distributions of $ (\bmt{p}_s, \bmt{r}_s, \tilde{n}_s) $. Here $\tilde{n}_s$ is a one-dimensional auxiliary random variable that arises from a lifted ambiguity set $\mathcal{G}_s$---a family of joint distributions of $(\bmt{p}_s, \bmt{r}_s, \n_s)$, whose projection onto $(\bmt{p}_s, \bmt{r}_s)$ gives $ \mathcal{F}_s$. In other words, there exists $ \mathcal{G}_s $ that satisfies 
$
\mathcal{F}_s = \bigcup_{\mathbb{P} \in \mathcal{G}_s} \{\prod_{(\bmt{p}_s, \bmt{r}_s)}\mathbb{P} \},
$ 
where $\mathbb{P} \in \mathcal{G}_s$ is a joint distribution of $(\bmt{p}_s, \bmt{r}_s, \tilde{n}_s)$ and we denote by $\prod_{(\bmt{p}_s, \bmt{r}_s)} \mathbb{P}$ the marginal distribution of $ (\bmt{p}_s, \bmt{r}_s) $ under $ \mathbb{P} $. In particular, we assume that the lifted ambiguity set $ \mathcal{G}_s$ is representable in the format
\begin{equation}
\label{set:general format}
\begin{array}{l}
\mathcal{G}_s \triangleq \\
\left\{\mathbb{P} \in \mathcal{P}_0(\mathbb{R}^{I_{s1}} \times \mathbb{R}^{I_{s2}} \times [N_s]) ~\left\vert~
\begin{array}{ll}
(\bmt{p}_s, \bmt{r}_s, \tilde{n}_s) \sim \mathbb{P} \\
\ep{(\bmt{p}_s, \bmt{r}_s) \mid \tilde{n}_s \in \mathcal{N}_j} = \bm{\mu}_j &~\forall j \in [J_s] \\
\ep{\bm{g}_{j\tilde{n}_s}(\bmt{p}_s, \bmt{r}_s) \mid \tilde{n}_s \in \mathcal{N}_j} \leq \bm{\nu}_j &~\forall j \in [J_s] \\
\pp{(\bmt{p}_s, \bmt{r}_s) \in \mathcal{D}_{n} \mid \tilde{n}_s = n} = 1 &~\forall n \in [N_s] \\
\pp{\tilde{n}_s = n} = \omega_n &~\forall n \in [N_s] \\
{\rm for~ some~} \bm{\omega} \in \mathcal{W}, ~(\bm{\mu}_j, \bm{\nu}_j) \in \mathcal{U}_{j} &~\forall j \in [J_s] 
\end{array}
\right.\right\}.
\end{array}
\end{equation}
Here, dimensions of the uncertain parameters are $I_{s1} = \vert \mathcal{S} \vert$ and $I_{s2} = \vert \mathcal{A}_s \vert$;  for each $j \in [J_s]$, the set $\mathcal{N}_{j} \subseteq [N_s]$ is a subset of scenarios;  for all $n \in \mathcal{N}_j$, $j \in [J_s]$, the function $\bm{g}_{jn}: \mathbb{R}^{I_{s1} + I_{s2}} \mapsto \mathbb{R}^{M_j}$ is convex lower semi-continuous; sets $\mathcal{D}_n \subseteq \mathbb{R}^{I_{s1} + I_{s2}}$, $n \in [N_s]$ and $\mathcal{U}_j \subseteq \mathbb{R}^{I_{s1} + I_{s2} + M_{j}}$, $j \in [J_s]$ are closed, convex and compact; and the set $\mathcal{W} \subseteq {\rm int}\{\bm{w} \in \mathbb{R}^{N_s} \mid \bm{e}^\top\bm{w} = 1\}$. For all $n \in [N_s]$, we also define $\mathcal{J}_n \triangleq \{j \in [J_s] \mid n \in \mathcal{N}_j\}$ for notational convenience. 

One could verify that $\mathcal{G}_s$ is a convex set of joint probability distributions of $(\bmt{p}_s, \bmt{r}_s)$ and $\n_s$. The one-dimensional auxiliary random variable $\n_s$ is discrete with its finitely many scenarios collectively denoted by a set $[N_s]$ and it plays a key role in the format~\cref{set:general format}. Firstly, any $j \in [J_s]$ induces a condition that consists of a subset $\mathcal{N}_j \subseteq [N_s]$ of scenarios and gives the conditional generalized moment information about the uncertain parameters (see the first and second constraint groups).  Secondly, the auxiliary random variable $\n_s$ takes the $n^{\rm th}$ scenario with probability $\omega_n$, and conditioning on this scenario, the support of the uncertainty parameters could be scenario-wise different (see the third constraint group). Finally, the conditional generalize moment information as well as the probabilities of scenarios can all be uncertain and can only be known to reside in some uncertainty sets (see the constraint group in the last line). As we will present in the coming examples, by simply introducing the one-dimensional random variable $\n_s$, the lifted ambiguity set $\mathcal{G}_s$ is intuitive by expressing a rich family of ambiguity sets in the literature and can facilitate the construction of new tailored ambiguity sets.  

To obtain a tractable reformulation, we utilize the concept of conic representation and make the following assumption.
\begin{assumption}\label{assu_slater}
For all $ s \in \mathcal{S} $, the conic representation of the following system 
\begin{equation}
\label{eq:slater condition}
\left\{
\begin{array}{ll}
\displaystyle \sum_{n \in \mathcal{N}_j} \bm{\xi}_{n} = \bm{\phi}_j &~\forall j \in [J_s] \\[3mm]
\displaystyle \sum_{n \in \mathcal{N}_j} \bm{\zeta}_{jn} \leq \bm{\varphi}_j &~\forall j \in [J_s] \\[3mm]
\dfrac{\bm{\zeta}_{jn}}{\tau_n} \geq \bm{g}_{jn}\bigg(\dfrac{\bm{\xi}_n}{\tau_n}\bigg) &~\forall n \in [N_s], ~j \in \mathcal{J}_n \\[3mm]
\dfrac{\bm{\xi}_n}{\tau_n} \in \mathcal{D}_n &~\forall n \in [N_s]\\[3mm]
\dfrac{(\bm{\phi}_j, \bm{\varphi}_j)}{\sum_{i \in \mathcal{N}_j} \tau_i} \in \mathcal{U}_j &~\forall j \in [J_s] \\[3mm]
\bm{\tau} \in \mathcal{W}
\end{array}
\right.
\end{equation}
satisfies the Slater's condition, that is, the conic representation of~\cref{eq:slater condition} has a relatively interior point (see, Theorem 1.4.2 in \cite{Ben-tal_Nemirovski_book}).
\end{assumption}
The Slater's condition of the system~\cref{eq:slater condition}, although it appears to be weakly connected with the format~\cref{set:general format}, is a typical assumption made in the distributionally robust optimization literature. More importantly, as we will show shortly, it guarantees the infimum over the ambiguity set, $\inf_{\mathbb{P} \in \mathcal{F_S}}\mathbb{E}_{\mathbb{P}}\left[u(\bm{\pi}, \bmt{p},\bmt{r},s_1)\right]$ to be attainable.

Using the lifted ambiguity sets $ \mathcal{G}_s, s \in \mathcal{S} $, we define a lifted ambiguity set corresponding to $ \mathcal{F}_\mathcal{S}$. 
\begin{equation}\label{set:decomposite}
\mathcal{G}_\mathcal{S}\triangleq\left\{\mathbb{P} ~\left\vert~ \mathbb{P} = \bigotimes_{s\in \mathcal{S}}\mathbb{P}_{s}, ~\mathbb{P}_s\in \mathcal{G}_s ~\forall s\in \mathcal{S} \right.\right\}.
\end{equation}
By the definition of $\mathcal{F}_\mathcal{S} $ and the relation between $\mathcal{F}_s $ and $\mathcal{G}_s$, the state-wise property applies to $\mathcal{G}_\mathcal{S}$ as well. Moreover, we have
$ \prod_{(\bmt{p}, \bmt{r})}\mathcal{G}_{\mathcal{S}} = \mathcal{F}_\mathcal{S}$,
which implies
$$
\max_{\bm{\pi} \in\Pi^{\text{HR}}}\inf_{\mathbb{P} \in \mathcal{F}_\mathcal{S}}\mathbb{E}_{\mathbb{P}}\left[u(\bm{\pi}, \bmt{p},\bmt{r},s_1)\right] = \max_{\bm{\pi} \in\Pi^{\text{HR}}}\inf_{\mathbb{P} \in \mathcal{G}_\mathcal{S}}w(\bm{\pi},\mathbb{P},s_1)
$$
for $w(\bm{\pi}, \mathbb{P}, s_1)\triangleq \mathbb{E}_{\mathbb{P}}\left[u(\bm{\pi}, \bmt{p},\bmt{r},s_1)\right]$ being the expected performance of a policy $ \bm{\pi} $ under a joint probability distribution $\mathbb{P}$ of $ (\bmt{p}, \bmt{r})$ and auxiliary random variables $(\n_s)_{s \in \mathcal{S}}$. The equality holds because the term $\mathbb{E}_{\mathbb{P}}\left[u(\bm{\pi}, \bmt{p},\bmt{r},s_1)\right]$ does not involve those auxiliary variables $(\n_s)_{s \in \mathcal{S}}$ in $ \mathcal{G}_\mathcal{S}$. Thus for the rest of this paper, we will use the lifted ambiguity set $\mathcal{G}_\mathcal{S}$ and work on 
$$
\max_{\bm{\pi} \in\Pi^{\text{HR}}}\inf_{\mathbb{P} \in \mathcal{G_S}}w(\bm{\pi},\mathbb{P},s_1). 
$$

\subsection{Examples of the Lifted Ambiguity Set $\mathcal{G}_s$}\label{section4}
Despite its apparent simplicity, the abstract format~\cref{set:general format} allows us to recover two important and distinct classes of ambiguity sets that are respectively based on generalized moment and statistical distance in a unified format. By doing so, we are also able to construct a hybridization of these two classes of ambiguity sets, which to the best of our knowledge has not been considered in distributionally robust MDPs, but we believe to be potential to leverage the generalized moment as well as statistical distance information at the same time. Besides, we are able to consider a class of mixture-distribution-based ambiguity sets whose member distributions encompass the mixture Gaussian distributions that have been used in MDPs (see e.g., \cite{Hoffman_Freitas_Doucet_Petrers2009,Yu_Xu_2016}). In the remaining of this section, we provide some concrete examples of the lifted ambiguity set $\mathcal{G}_s$ in the format~\cref{set:general format}.

\subsection*{Generalized-moment-based ambiguity sets}
Generalized-moment-based a- 

\noindent mbiguity sets involve probability constraints on the support and expectation constraints on the generalized moment of the ambiguous distributions.

\begin{example}[Support]
We consider the following ambiguity set with $N_s = 1$ and $J_s = 1$, which only contains the support information about the uncertainty: 
\begin{equation*}
\mathcal{G}_s = \left\{\mathbb{P} \in \mathcal{P}_0(\mathbb{R}^{I_{s1}} \times \mathbb{R}^{I_{s2}} \times \{1\}) ~\left\vert~
\begin{array}{ll}
(\bmt{p}_s, \bmt{r}_s, \n_s) \sim \mathbb{P} \\
\pp{(\bmt{p}_s, \bmt{r}_s) \in \mathcal{D} \mid \n_s = 1} = 1 \\
\pp{\n_s = 1} = 1
\end{array}
\right.\right\},
\end{equation*}
where the support set $ \mathcal{D} $ is tractable. In this case, distributionally robust MDPs recover classical robust MDPs (e.g., \cite{Iyengar_2005, Nilim_Ghaoui_2005}) that consider the worst-case scenario. It is well known that, due to Fenchel duality (see Proposition 3.1 in \cite{Follmer_Schied_2002} or Theorem 2.2 in \cite{Ruszczynski_Shapiro_2006}), any distributionally robust objective $\max_{\mathbb{P} \in \mathcal{G}_s}\ep{\tilde{r}} $, an expectation with respect to the worst-case density function $\mathbb{P}$ chosen adversarially from the ambiguity set $ \mathcal{G}_s $, is equivalent to a coherent risk measure that satisfies convexity, monotonicity, translation equivariance and positive homogeneity (see \cite{Artzner_1999}). When the ambiguity set only contains the support information, several classes of support sets would correspond to some popular coherent risk measures in finance (see \cite{Bertsimas_Brown_2009} and \cite{Natarajan_Pachamanova_Sim_2009}). We remark that real-world uncertainty originates from modeling errors (model uncertainty), and perhaps more importantly, from stochastic dynamics. A prudent policy should protect against both types of uncertainties. The Fenchel duality of coherent risk measures naturally relates the risk to model uncertainty. For risk in MDPs, a similar connection was made with time-consistent Markov coherent risk measures \cite{Osogami_2012}. Therefore, by carefully shaping the risk-criterion, it would enable decision makers to consider uncertainty in a broad sense. Designing a principled procedure for such risk-shaping is not trivial and is beyond the scope of this paper. However, we believe that there is much potential to risk-shaping as it may be the key for handling model misspecification in dynamic decision-making.
\end{example}

\begin{example}[Uncertain mean]
Suppose beyond the support $ \mathcal{D}$, we further have some knowledge about the uncertain mean $\ep{(\bmt{p}_s,\bmt{r}_s)}$ of the uncertainty, for instances, $\ep{(\bmt{p}_s,\bmt{r}_s)}  \in  [\underline{\bm{\mu}},\overline{\bm{\mu}}]$ and  $\| \ep{(\bmt{p}_s,\bmt{r}_s)} - \bm{\mu}^0 \|_2 \leq \theta $, where $ \bm{\mu}^0 $ is the estimation of the true mean. We can specify an ambiguity set in the general format~\cref{set:general format} as follows:
\begin{equation*}
\mathcal{G}_s = \left\{\mathbb{P} \in \mathcal{P}_0(\mathbb{R}^{I_{s1}} \times \mathbb{R}^{I_{s2}} \times \{1\}) ~\left\vert~
\begin{array}{ll}
(\bmt{p}_s, \bmt{r}_s, \n_s) \sim \mathbb{P} \\
\ep{(\bmt{p}_s, \bmt{r}_s) \mid \n_s = 1} = \bm{\mu} \\
\pp{(\bmt{p}_s, \bmt{r}_s) \in \mathcal{D} \mid \n_s = 1} = 1 \\
\pp{\n_s = 1} = 1 \\
{\rm for~ some~} \bm{\mu} \in \mathcal{U} 
\end{array}
\right.\right\},
\end{equation*}
where $ \mathcal{U} = \{\bm{\mu} \mid \underline{\bm{\mu}} \leq \bm{\mu} \leq \overline{\bm{\mu}}, ~\| \bm{\mu} - \bm{\mu}^0 \|_2 \leq \theta \} $. 
\end{example}

\subsection*{Statistical-distance-based ambiguity sets}
In the literature of distributionally robust optimization, the class of statistical-distance-based ambiguity sets has also attracted a great deal of interest. In a typical statistical-distance-based ambiguity set, ambiguous distributions are characterized by their proximity to a reference distribution through some statistical distance measures, including $ \phi$-divergences and the Wasserstein distance. The reference distribution is usually chosen to be the empirical distribution
${\mathbb{P}}^{\dag} = \frac{1}{N} \sum_{n \in [N]} \delta_{(\bm{p}^\dag_n, \bm{r}^\dag_n)}$, where $\delta$ is the Dirac distribution and $\{(\bm{p}^\dag_n, \bm{r}^\dag_n)\}_{n \in [N]}$ are empirical observations of $(\bmt{p}_s, \bmt{r}_s)$. Hence the statical-distance-based ambiguity sets are known to be favorable for a data-driven setting.

\begin{example}[$\phi$-divergence ambiguity set]
Let $ J_s = N_s = N $ and the support set $\mathcal{D}_{n} = \left\{(\bm{p}^\dag_n, \bm{r}^\dag_n)\right\}$ being a singleton for all $n \in [N]$. The $\phi$-divergence ambiguity set is defined as follows:
$$
\begin{array}{l}
\mathcal{G}_\phi(\theta) = \\
\left\{\mathbb{P} \in \mathcal{P}_0(\mathbb{R}^{I_{s1}} \times \mathbb{R}^{I_{s2}} \times [N]) ~\left\vert~
\begin{array}{ll}
(\bmt{p}_s, \bmt{r}_s, \n_s) \sim \mathbb{P} \\
\pp{(\bmt{p}_s, \bmt{r}_s) \in {\mathcal D}_n \mid \n_s = n} = 1 ~~\forall n \in [N] \\
\pp{\n_s = n} = \omega_n ~~\forall n \in [N] \\
\mbox{for some } \bm{\omega} \in \mathcal{W}(\theta)
\end{array}\right.
\right\},
\end{array}
$$
where 
$\mathcal{W}(\theta) = \{\bm{\omega} \in \mathbb{R}^{N}_+ \mid \sum_{j\in[J]} \omega_j = 1, \sum_{j \in [J]} \frac{1}{N} \phi (N\omega_j) \leq \theta\}$
with some divergence function $ \phi(\cdot) $ that is convex on $ \mathbb{R}_+ $ and satisfies $ \phi(1) = 0 $, $ 0 \phi(t/0) \triangleq t \lim_{a \to \infty} \phi(a)/a $ for $ t > 0 $, and $ 0 \phi(0/0) = 0 $. We refer to Table~2 in \cite{Ben_Hertog_Waegenaere_Melenberg_Rennen_2013} for a comprehensive list of $\phi$-divergence functions.
\end{example}

\begin{example}[Wasserstein ambiguity set]
Given a tractable distance metric $ \rho:\mathbb{R}^{I_{s1}+I_{s2}} \times \mathbb{R}^{I_{s1}+I_{s2}} \mapsto \mathbb{R}_+ $ such that $\rho((\bm{p}_s, \bm{r}_s), (\bm{p}^\dag, \bm{r}^\dag)) = 0 $ if and only if $ (\bm{p}_s, \bm{r}_s) = (\bm{p}^\dag, \bm{r}^\dag) $, the type-$1$ Wasserstein distance is defined via
\begin{equation*}
\nonumber
\label{Wasserstein distance}
\begin{array}{rcl}
d_W(\mathbb{P},\mathbb{P}^\dag) \triangleq &\inf 
& \epbar{\rho((\bmt{p}_s, \bmt{r}_s), (\bmt{p}^\dag, \bmt{r}^\dag))} \\
&{\rm s.t.} &(\bmt{p}_s, \bmt{r}_s) \sim \mathbb{P}, ~(\bmt{p}^\dag, \bmt{r}^\dag) \in \mathbb{P}^\dag \\
&& \Pi_{(\bmt{p}_s, \bmt{r}_s)}\bar{\mathbb{P}} = \mathbb{P} \\
&& \Pi_{(\bmt{p}^\dag, \bmt{r}^\dag)}\bar{\mathbb{P}} = \mathbb{P}^\dag. 
\end{array}
\end{equation*}
Note that the distance metric is usually chosen as a $p$-norm for some real number $p \geq 1$. The Wasserstein ambiguity set with a support set $ \mathcal{D} $ is defined as
\begin{equation*}
\label{set:statistical-distance-based Wasserstein}
\mathcal{F}_W(\theta) \triangleq 
\left\{
\mathbb{P} \in \mathcal{P}_0(\mathcal{D})~\left\vert~
\begin{array}{ll}
(\bmt{p}_s, \bmt{r}_s) \sim \mathbb{P}, ~(\bmt{p}^\dag,\bmt{r}^\dag) \sim \mathbb{P}^\dag \\
d_W(\mathbb{P}, \mathbb{P}^\dag) \leq \theta \\
\end{array} \\
\right.\right\},
\end{equation*}
which is a Wasserstein ball of radius $ \theta $ centered around the empirical probability distribution ${\mathbb{P}}^{\dag}$. It is well known that the Wasserstein approach \emph{(i)} recovers the sample average approach when $\theta=0$ as the Wasserstein ambiguity set would collapse to a singleton set containing only the empirical distribution $\mathbb{P}^\dag$ and \emph{(ii)} recovers the classical robust optimization approach when $ \theta \geq \sup_{\bm{\zeta}_1, \bm{\zeta}_2 \in \mathcal{D}} \rho(\bm{\zeta}_1, \bm{\zeta}_2) $ as the Wasserstein ambiguity set would contain all Dirac distributions, each of which puts a unit probability mass on a possible scenario in the support set $\mathcal{D}$.

Recently, \cite{Chen_Sim_Xiong_2017} show that the Wasserstein ambiguity set $\mathcal{F}_W(\theta)$ coincides with the marginal distribution of $(\bm{p}_s, \bm{r}_s)$ under $\mathbb{P}$, for all $\mathbb{P}$ in a lifted ambiguity set $\mathcal{G}_s(\theta)$ such that 
\begin{equation*}
\label{set:Wasserstein}
\begin{array}{l}
\mathcal{G}_W(\theta) = \\
\left\{\mathbb{P} \in \mathcal{P}_0 \left(\mathbb{R}^{I_{s1}} \times \mathbb{R}^{I_{s2}} \times [N] \right) ~\left\vert~
\begin{array}{ll}
\left(\bmt{p}_s, \bmt{r}_s, \tilde{n}_s \right) \sim \mathbb{P} \\
\mathbb{E}_\mathbb{P}[\rho((\bmt{p}_s, \bmt{r}_s), (\bm{p}^\dag_{\tilde{n}_s}, \bm{r}^\dag_{\tilde{n}_s})) \mid \tilde{n}_s \in [N]] \leq \theta \\	
\pp{(\bmt{p}_s, \bmt{r}_s) \in \mathcal{D} \mid \tilde{n}_s = n} = 1 ~~\forall n \in [N] \\
\pp{\tilde{n}_s = n} = \frac{1}{N} ~~\forall n \in [N]
\end{array}
\right.\right\}
\end{array}
\end{equation*}
is in the general format~\cref{set:general format}.
\end{example}

\begin{remark}	
Statistical-distance-based ambiguity sets critically depend on the training samples by considering the discrete empirical distribution. They may lead to conservative solutions when one has a firm belief on the structure information about the uncertainty such as mean, dispersion, correlation, and dependence that commonly appear in the generalized moment-based ambiguity set. Presenting the statistical-distance-based ambiguity sets in the general format~\cref{set:general format}, we can mitigate the conservativeness by further incorporate the structure information in an intuitive way. For example, we can tailor an ambiguity set that restricts the Wasserstein distance from the ambiguous distribution to the empirical distribution, while at the same time specifies the popular mean absolute deviation from the mean (see, for example, \cite{Postek_2018}) as follows:
\begin{equation*}
\label{set:hybrid}
\begin{array}{l}
\mathcal{G}(\theta) = \\
\left\{\mathbb{P} \in \mathcal{P}_0 \left(\mathbb{R}^{I_{s1}} \times \mathbb{R}^{I_{s2}} \times [N] \right) ~\left\vert~
\begin{array}{ll}
\left(\bmt{p}_s, \bmt{r}_s, \tilde{n}_s \right) \sim \mathbb{P} \\
\ep{(\bmt{p}_s, \bmt{r}_s) \mid \tilde{n}_s \in [N]} \in [\underline{\bm{\mu}}, \bar{\bm{\mu}}] \\
\ep{\vert \bm{e}^\top((\bmt{p}_s, \bmt{r}_s) - \bm{\mu}_0) \vert \mid \tilde{n}_s \in [N]} \leq \bm{\nu} \\
\ep{\rho((\bmt{p}_s, \bmt{r}_s), (\bm{p}^\dag_{\tilde{n}_s}, \bm{r}^\dag_{\tilde{n}_s})) \mid \tilde{n}_s \in [N]} \leq \theta \\	
\pp{(\bmt{p}_s, \bmt{r}_s) \in \mathcal{D} \mid \tilde{n}_s = n} = 1 ~~\forall n \in [N] \\
\pp{\tilde{n}_s = n} = \frac{1}{N} ~~\forall n \in [N] \\
{\rm for~some~} \bm{\mu}_0 \in [\underline{\bm{\mu}}, \bar{\bm{\mu}}]
\end{array}
\right.\right\}.
\end{array}
\end{equation*}
\end{remark}

\subsection*{Mixture-distribution-based ambiguity sets}
We can use the format~\cref{set:general format} to specify an ambiguous mixture distribution, which is useful, for example, in modeling multi-modal distributions under ambiguity. Consider the ambiguity set
$$
\begin{array}{l}
\mathcal{G}_s = \\
\left\{\mathbb{P} \in \mathcal{P}_0\left(\mathbb{R}^{I_{s1}} \times \mathbb{R}^{I_{s2}} \times [N] \right) ~\left\vert~
\begin{array}{ll}
(\bmt{p}_s, \bmt{r}_s,\n_s) \sim \mathbb{P}\\
\ep{(\bmt{p}_s, \bmt{r}_s) \mid \n_s = n} = \bm{\mu}_n ~~\forall n \in [N] \\
\ep{\bm{g}_n(\bmt{p}_s, \bmt{r}_s) \mid \n_s = n} \leq \bm{\nu}_n ~~\forall n \in [N] \\
\pp{(\bmt{p}_s, \bmt{r}_s) \in \mathcal{D}_n \mid \n_s = n} = 1 ~~\forall n \in [N] \\
\pp{\n_s = n} = \omega_n ~~\forall n \in [N] \\
\mbox{for some } \bm{\omega} \in \mathcal{W}, ~(\bm{\mu}_n,\bm{\nu}_n) \in \mathcal{U}_n  ~~\forall n \in [N]
\end{array}\right.
\right\}.
\end{array}
$$ 
The projection over $(\bmt{p}_s, \bmt{r}_s)$ of any member $\mathbb{P} \in \mathcal{G}_s$ is a mixture of $N$ distinct distributions, which themselves are ambiguous. Hence, for a given $\bm{\omega} \in \mathcal{W}$, we can write
$
\Pi_{(\bmt{p}_s, \bmt{r}_s)}\mathbb{P} = \sum_{n \in [N]} \omega_n \mathbb{P}_n,
$
where each $\mathbb{P}_n \in \mathcal{F}_n$ is unknown beyond certain distributional information described in an ambiguity set
$$
\mathcal{F}_n = \left\{\mathbb{P} \in \mathcal{P}_0\left(\mathbb{R}^{I_{s1}} \times \mathbb{R}^{I_{s2}} \right) ~\left\vert~
\begin{array}{ll}
(\bmt{p}_s, \bmt{r}_s) \sim \mathbb{P}\\
\ep{(\bmt{p}_s, \bmt{r}_s)} = \bm{\mu}_n \\
\ep{\bm{g}_n(\bmt{p}_s, \bmt{r}_s)} \leq \bm{\nu}_n \\
\pp{(\bmt{p}_s, \bmt{r}_s) \in \mathcal{D}_n} = 1\\
(\bm{\mu}_n,\bm{\nu}_n) \in \mathcal{U}_n
\end{array}\right.
\right\}.
$$
It is possible to modify the ambiguity sets $\mathcal{F}_n$, $n \in [N]$ to specify first-and second- order moments of their member distributions. As a result, the ambiguity set $\mathcal{G}_s$ would include mixture Gaussian distributions with the specified moments and the corresponding distributionally robust MDPs hedge against these mixture Gaussian distributions.

\section{Finite Horizon Case}\label{section3}
In this section, we focus on distributionally robust MDPs with a finite number of decision stages, i.e., $T<\infty$. We propose a decision criterion we term distributionally robustness, which incorporates a priori information of how parameters are distributed. We show that a strategy defined through backward induction, which we called $\mathcal{S}$-robust strategy, is a distributionally robust strategy. We further show that such strategy is solvable in polynomial time under mild technical conditions.

Similarly to \cite{Nilim_Ghaoui_2005}, we assume that when a state is visited multiple times, the uncertain parameters may take a different realization for each visit. Therefore, multiple visits to a state can be treated as visiting different states. By
introducing dummy states, for finite horizon case we make the following assumption to simplify our derivations, without any loss of generality.

\begin{assumption}\label{assu_mdp}
We assume:
\begin{enumerate}
\item Each state belongs to only one stage, i.e., $\mathcal{S}=\bigcup_{t=1}^T\mathcal{S}_t$ where $\mathcal{S}_t$ is the set of states that belong to the $t$-th stage for all $ t \in [T]$.
\item The first stage only contains one state $s_1$.
\item The terminal reward equals zero.
\end{enumerate}
\end{assumption}

Under \cref{assu_mdp}, we can partition the state space $\mathcal{S}$ according to the stage to which each state belongs. We next define the notion of distributionally robustness for a strategy.

\begin{definition}\label{definition1}
A strategy $\bm{\pi}^*\in\Pi^{\text{HR}}$ is distributionally robust with respect to $\mathcal{G_S}$ if it satisfies 
$$\inf_{\mathbb{P} \in \mathcal{G_S}}w(\bm{\pi},\mathbb{P},s_1)\leq\inf_{\mathbb{P}'\in \mathcal{G_S}}w(\bm{\pi}^*,\mathbb{P}',s_1) 
~~\forall \bm{\pi} \in\Pi^{\text{HR}}.
$$
\end{definition}

In words, each strategy is evaluated by its expected performance under the (respective) most adversarial distribution of the uncertain parameters, and a distributionally robust strategy is the optimal strategy according to this measure. The main focus of this section is to derive approaches for obtaining the distributionally robust strategy for the finite horizon case. To this end, we need the following definition.

\begin{definition}\label{def}
Given an undiscounted distributionally robust MDP $\langle \mathcal{S}, \mathcal{A}, T, \gamma, \mathcal{G}_{\mathcal{S}}\rangle$ with $T<\infty$, we define the $\mathcal{S}$-robust problem through the following:
\begin{enumerate}
\item For all $s\in \mathcal{S}_T$, the $\mathcal{S}$-robust value ${v}_T(s)\triangleq 0$.
\item For all $s\in \mathcal{S}_t$, $t<T$,  the $\mathcal{S}$-robust value ${v}_t(s)$ is defined as 
\begin{equation*}\label{S-robust problem 1}
\begin{array}{ll}
&{v}_t(s) \\
\triangleq & \displaystyle \max_{\bm{\pi}_s \in \mathcal{P}(\mathcal{A}_s)} \inf_{\mathbb{P} \in \mathcal{G}_s} \ep{\sum_{a \in \mathcal{A}_s} \pi_s(a) \left(\tilde{r}(s, a) + \sum_{s' \in \mathcal{S}_{t+1}} \tilde{p}(s' \vert s, a) v_{t+1}(s')\right)} \\
=& \displaystyle \max_{\bm{\pi}_s \in \mathcal{P}(\mathcal{A}_s)} \inf_{\mathbb{P} \in \mathcal{G}_s} \ep{\bmt{r}^\top_s \bm{\pi}_s + \bmt{p}^\top_s \bm{V}_{t+1,s} \bm{\pi}_s},\\
\end{array}
\end{equation*}
and the $\mathcal{S}$-robust randomized action $\bm{\pi}^*_s$ is defined as 
\begin{equation*}\label{S-robust problem 2}
\bm{\pi}^*_s \in
\argmax_{\bm{\pi}_s \in \mathcal{P}(\mathcal{A}_s)} \inf_{\mathbb{P} \in \mathcal{G}_s} \ep{\bmt{r}^\top_s \bm{\pi}_s + \bmt{p}^\top_s \bm{V}_{t+1,s} \bm{\pi}_s},
\end{equation*}
where $\bm{V}_{t+1,s} \in \mathbb{R}^{|\mathcal{S}_{t+1}||\mathcal{A}_s| \times |\mathcal{A}_s|}$ is defined as  
$$
\bm{V}_{t+1,s} \triangleq \left[ 
\begin{array}{ccccc}
\bm{v}_{t+1} ~&\bm{0} ~&\bm{0} ~&\cdots ~&\bm{0} \\
\bm{0} ~&\bm{v}_{t+1} ~&\bm{0} ~&\cdots ~&\bm{0} \\
\vdots ~&\vdots ~&\vdots ~&\ddots ~&\vdots \\
\bm{0} ~&\bm{0} ~&\bm{0} ~&\cdots ~&\bm{v}_{t+1} \\
\end{array}
\right] = \left[
\begin{array}{c}
\bm{v}_{t+1}\bm{e}_1^\top \\
\bm{v}_{t+1}\bm{e}_2^\top \\
\vdots \\
\bm{v}_{t+1}\bm{e}_{\vert \mathcal{A}_s \vert}^\top \\
\end{array}
\right]  
$$	
with $ \bm{v}_{t+1} = \left(v_{t+1}(s')\right)_{s' \in \mathcal{S}_{t+1}}$ and $ \bm{e}_1, \dots, \bm{e}_{\vert \mathcal{A}_s \vert}$ being the standard basis in $\mathbb{R}^{\vert \mathcal{A}_s \vert}$. 

\item A strategy $\bm{\pi}^*$ is a $\mathcal{S}$-robust strategy if for any $ s\in \mathcal{S}$ and any history $ \mathcal{H}_t$ ends at $s$, we have $\bm{\pi}^*$, conditioned on history $ \mathcal{H}_t$, is a $\mathcal{S}$-robust randomized action.
\end{enumerate}
\end{definition}

Note that the definition essentially requires that the strategy must be robust with respect to each subproblem over time $t\in\{1,\dots,T\}$, and hence the name ``$\mathcal{S}$-robust''. Indeed, readers familiar with literature in robust MDPs (see \cite{Iyengar_2005, Nilim_Ghaoui_2005, White_1994}) may find that $\mathcal{S}$-robust strategy is the solution to the robust MDP where the ambiguity set only contains the support information of uncertain parameters $\mathcal{G}_s=\{\mathbb{P} \mid \pp{(\bmt{p}_s, \bmt{r}_s, \bmt{u}_s) \in \mathcal{D}_{s}} = 1\}$. The following theorem shows that any $\mathcal{S}$-robust strategy $\pi^*$ is distributionally robust.

\begin{theorem}\label{theorem:S robust}
Let $T<\infty$. Under \cref{assu_mdp}, if $\bm{\pi}^*$ is a $\mathcal{S}$-robust strategy, then
\begin{enumerate}
\item $\bm{\pi}^*$ is a distributionally robust strategy with respect to $\mathcal{G_S}$.
\item There exists $\mathbb{P}^*\in \mathcal{G}_s$ such that $(\bm{\pi}^*,\mathbb{P}^*)$ is a saddle point. That is,
\begin{equation*}
\max_{\bm{\pi}\in \Pi^{\textnormal{HR}}}w(\bm{\pi},\mathbb{P}^*,s_1)=w(\bm{\pi}^*,\mathbb{P}^*,s_1)=\inf_{\mathbb{P}\in \mathcal{G_S}}w(\bm{\pi}^*,\mathbb{P},s_1).
\end{equation*}
\end{enumerate}
\end{theorem}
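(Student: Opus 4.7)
The plan is to proceed by backward induction on the stage index $t=T,T-1,\dots,1$, exploiting the $\mathcal{S}$-rectangular structure baked into $\mathcal{G}_{\mathcal{S}}$ (the product form in~\cref{set:decomposite}). The induction hypothesis I would carry is that, for every $s\in\mathcal{S}_t$, the value ${v}_t(s)$ from \cref{def} equals the distributionally robust value of the subproblem started at $s$, and that there exists a distribution $\mathbb{P}^*_{\ge t}$ supported on the transition/reward triples associated with states in $\bigcup_{\tau\ge t}\mathcal{S}_\tau$ that (i) lies in the state-wise product of the $\mathcal{G}_s$'s and (ii) attains the infimum when paired with the $\mathcal{S}$-robust strategy on that horizon. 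The base case $t=T$ is immediate since ${v}_T\equiv 0$ and any $\mathbb{P}$ is trivially a saddle partner.

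For the inductive step at a fixed $s\in\mathcal{S}_t$, I would first use the tower property together with $\mathcal{S}$-rectangularity to write, for any history-dependent $\bm{\pi}$,
\begin{equation*}
\inf_{\mathbb{P}\in\mathcal{G}_\mathcal{S}} w(\bm{\pi},\mathbb{P},s_1) \;=\; \inf_{\mathbb{P}_s\in\mathcal{G}_s} \mathbb{E}_{\mathbb{P}_s}\!\left[\bmt{r}_s^\top \bm{\pi}_s + \bmt{p}_s^\top \bm{V}_{t+1,s}\,\bm{\pi}_s\right] \;+\; (\text{terms from other stages}),
\end{equation*}
and argue inductively that the future terms decouple into similar state-wise infima involving the values $v_{t+1}(\cdot)$. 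This reduces everything to a one-stage game at $s$ between the randomized action $\bm{\pi}_s\in\mathcal{P}(\mathcal{A}_s)$ and $\mathbb{P}_s\in\mathcal{G}_s$. Because the objective $\mathbb{E}_{\mathbb{P}_s}[\bmt{r}_s^\top \bm{\pi}_s + \bmt{p}_s^\top \bm{V}_{t+1,s}\bm{\pi}_s]$ is linear in $\bm{\pi}_s$ and linear (hence concave and convex) in $\mathbb{P}_s$, and since $\mathcal{P}(\mathcal{A}_s)$ is compact convex while $\mathcal{G}_s$ is convex, a Sion-type minimax theorem swaps the $\max$ and $\inf$, yielding exactly the $\mathcal{S}$-robust value $v_t(s)$ and showing that the $\mathcal{S}$-robust randomized action is optimal for the history ending at $s$.

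To produce the saddle distribution $\mathbb{P}^*$, I would invoke \cref{assu_slater}: the Slater condition on the conic representation of system~\cref{eq:slater condition} guarantees strong duality and attainment of the infimum in the lifted formulation of $\mathcal{G}_s$, so there is a minimizer $\mathbb{P}^*_s\in\mathcal{G}_s$ achieving the one-stage worst case against $\bm{\pi}^*_s$. Taking the product of these minimizers across $s\in\mathcal{S}$ yields $\mathbb{P}^*\in\mathcal{G}_{\mathcal{S}}$; combining with the backward induction gives $w(\bm{\pi}^*,\mathbb{P}^*,s_1)=\inf_{\mathbb{P}\in\mathcal{G}_\mathcal{S}}w(\bm{\pi}^*,\mathbb{P},s_1)$. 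The reverse inequality $\max_{\bm{\pi}}w(\bm{\pi},\mathbb{P}^*,s_1)=w(\bm{\pi}^*,\mathbb{P}^*,s_1)$ follows because, once $\mathbb{P}^*$ is fixed, the problem becomes a standard MDP whose Bellman optimal policy coincides with the greedy policy against the same $\bm{V}_{t+1,s}$, i.e., $\bm{\pi}^*$.

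The main technical obstacle I anticipate is the justification of the min-max swap at each stage in a way that is compatible with history-dependent randomized policies, rather than only with Markovian ones. The standard trick is to argue that the outer maximization reduces to Markovian randomized decision rules because the one-stage game at $s$ depends on history only through the current state and the value function $v_{t+1}$; this collapse then lets Sion's theorem apply cleanly. A secondary obstacle is verifying that the infimum over $\mathcal{G}_s$ is actually attained (not merely approached), which is where \cref{assu_slater} does essential work: without it, the saddle distribution $\mathbb{P}^*$ might fail to exist and only an $\varepsilon$-saddle statement would be available.
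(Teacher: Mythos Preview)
Your backward-induction approach is sound but takes a genuinely different route from the paper. The paper does not induct on stages directly; instead it makes one structural observation (citing Lemma~3.2 of \cite{Xu_Mannor_2012}): because $u(\bm{\pi},\bm{p},\bm{r},s_1)$ is linear in each state-wise block $(\bm{p}_s,\bm{r}_s)$ and the state-wise distributions are independent under $\mathcal{G}_\mathcal{S}$, one has $w(\bm{\pi},\mathbb{P},s_1)=u(\bm{\pi},\bmb{p},\bmb{r},s_1)$ with $\bmb{p}=\ep{\bmt{p}}$ and $\bmb{r}=\ep{\bmt{r}}$. This collapses the distributionally robust MDP to a classical robust MDP whose uncertainty set is $\bigotimes_{s\in\mathcal{S}} \mathcal{Z}_s$ with $\mathcal{Z}_s=\{\ep{(\bmt{p}_s,\bmt{r}_s)}:\mathbb{P}\in\mathcal{G}_s\}$, shown to be convex and compact; the saddle-point conclusion then comes for free from \cite{Iyengar_2005,Nilim_Ghaoui_2005}. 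What this buys is that the history-dependence collapse and the stage-wise minimax swap are both delegated to results already established for robust MDPs, so the proof sidesteps the two obstacles you flag. Your route is more self-contained but must redo that work, and your displayed decoupling equation is not quite right as written (the future is coupled multiplicatively through $\bmt{p}_s$, not additively).

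One technical point: you invoke \cref{assu_slater} to secure attainment of the infimum over $\mathcal{G}_s$, but Slater gives strong duality, not primal attainment. The paper instead argues that $\mathcal{G}_s$ is convex and weakly compact (bounded support sets $\mathcal{D}_n$, weakly closed constraints), so its image $\mathcal{Z}_s$ under the expectation map is a compact convex subset of Euclidean space on which the linear objective attains its minimum. In fact, \cref{assu_slater} is not used at all in the proof of this theorem; it enters only in \cref{theorem:dro reformulation}, for the finite-dimensional reformulation.
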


\begin{proof}[Proof]
%\noindent{\emph{Proof of Theorem~\ref{theorem:S robust}.}}$\;$
We proceed in three steps. We first show that the expected performance of a given strategy under a distribution $\mathbb{P}\in \mathcal{G_S}$ depends only on the expected value of the uncertain parameters~(Step~1). This allows us to reduce distributionally robust MDPs to classical robust MDPs which considers the set-inclusive formulation of uncertainty. We then show that the set of expected value of the uncertain parameters is convex and compact~(Step~2). Finally, we complete the proof by using results in classical robust MDPs~(Step~3).

\emph{Step~1.} We use Lemma~3.2 in \cite{Xu_Mannor_2012}, which states that given $\bm{\pi}\in \Pi^{\textnormal{HR}}$ and $\mathbb{P} \in \mathcal{G_S}$, we have $w(\bm{\pi},\mathbb{P},s_1)=u(\bm{\pi},\bmb{p},\bmb{r},s_1)$, where $\bmb{p}=\ep{\bmt{p}}$ and $\bmb{r}=\ep{\bmt{r}}$. Thus distributionally robust MDPs reduces to classical robust MDPs. 

\emph{Step~2.}  We characterize the set of expected value of the parameters. It is not hard to verify that by the design of format~\cref{set:general format}, the ambiguity set $\mathcal{G}_s$ is convex. In addition, because the feasible set of each constraint in $\mathcal{G}_s$ is weakly closed (and so is the intersection of feasible sets of these constraints), $\mathcal{G}_s$ is also weakly closed. Since $\mathcal{G}_s$ is bounded, it is compact. Hence the set ${\mathcal{Y}}_s\triangleq\{\mathbb{E}_{\mathbb{P}}[(\bmt{p}_s, \bmt{r}_s, \n_s)] \mid \mathbb{P}\in \mathcal{G}_s\}$, as the image of $\mathcal{G}_s$ under expectation (which is {a continuous function}), is closed. Following from the assumed boundedness of $\mathcal{D}_n$, $n \in [N_s]$, $\mathcal{Y}_s$ is also bounded and is thus compact. Furthermore, given any $s\in \mathcal{S}$, $\mathbb{P}_1,\mathbb{P}_2\in \mathcal{G}_s$ and $\lambda\in[0,1]$, we have $\lambda\mathbb{E}_{\mathbb{P}_1}[(\bmt{p}_s, \bmt{r}_s, \n_s)]+(1-\lambda)\mathbb{E}_{\mathbb{P}_2}[(\bmt{p}_s, \bmt{r}_s, \n_s)]=\mathbb{E}_{\lambda\mathbb{P}_1 +(1-\lambda)\mathbb{P}_2}[(\bmt{p}_s, \bmt{r}_s, \n_s)]$. Thus the set $\mathcal{Y}_s$ is convex since $\mathcal{G}_s$ is convex. Therefore, the set $\mathcal{Z}_s\triangleq\{\mathbb{E}_{\mathbb{P}}[(\bmt{p}_s, \bmt{r}_s)] \mid \mathbb{P}\in \mathcal{G}_s\}$, as the projection of $\mathcal{Y}_s$ onto its first two coordinates, is convex and compact. We are now able to conclude that for any $s\in \mathcal{S}$ and $\bm{\pi}_s\in \mathcal{P}(\mathcal{A}_s)$, there exists $(\bm{p}_s^*, \bm{r}_s^*)\in \mathcal{Z}_s$ that satisfies 
$$
\displaystyle\inf_{(\bm{p}_s, \bm{r}_s)\in\mathcal{Z}_s}\left\{\bm{r}^\top_s \bm{\pi}_s + \bm{p}^\top_s \bm{V}_{t+1,s} \bm{\pi}_s\right\}={\bm{r}_s^*}^\top \bm{\pi}_s + {\bm{p}_s^*}^\top \bm{V}_{t+1,s} \bm{\pi}_s
$$ 
because the objective of the minimization is linear in $\bm{p}_s$ and $\bm{r}_s$ and the constraint set is convex and compact. Moreover, we can construct $\bigotimes_{s\in \mathcal{S}}\mathcal{Z}_s=\{\ep{(\bmt{p}, \bmt{r})} \mid \mathbb{P}\in \mathcal{G_S}\}$ by the state-wise decomposability of the lifted ambiguity set $\mathcal{G_S}$ in~\cref{set:decomposite}. 

\emph{Step~3.} We now explore the equivalence between distributionally robust MDPs and classical robust MDPs, where the uncertainty set is $\bigotimes_{s\in \mathcal{S}}\mathcal{Z}_s$. It is well known that for classical robust MDPs, a saddle point of the minimax objective 
$$\max_{\bm{\pi}\in \Pi^{\text{HR}}}\inf_{(\mathbf{p}, \mathbf{r})\in\bigotimes_{s\in \mathcal{S}}\mathcal{Z}_s}u(\bm{\pi},\bm{p},\bm{r},s_1)$$
exists (see \cite{Iyengar_2005} and \cite{Nilim_Ghaoui_2005}). To be more precise, there exist 
$\bm{\pi}^*\in\Pi^{\text{HR}}$ and $(\bm{p}^*, \bm{r}^*)\in \bigotimes_{s\in \mathcal{S}}\mathcal{Z}_s$ satisfying
\begin{equation}
\nonumber
\displaystyle\inf_{(\mathbf{p}, \mathbf{r})\in\bigotimes_{s\in \mathcal{S}}\mathcal{Z}_s}u(\bm{\pi}^*,\bm{p},\bm{r},s_1) =u(\bm{\pi}^*,\bm{p}^*,\bm{r}^*,s_1)=\max_{\bm{\pi}\in \Pi^{\text{HR}}}u(\bm{\pi},\bm{p}^*,\bm{r}^*,s_1).
\end{equation}
Moreover, we can construct $\bm{\pi}^*$ and $(\bm{p}^*,\bm{r}^*)$ state-wise through defining $\bm{\pi}^*=\bigotimes_{s\in \mathcal{S}}\bm{\pi}^*_s$ and $(\bm{p}^*,\bm{r}^*)=\bigotimes_{s\in \mathcal{S}}(\bm{p}^*_s,\bm{r}^*_s)$, where for each $s\in \mathcal{S}_t$, $\bm{\pi}^*_s$ and $(\bm{p}^*_s,\bm{r}^*_s)$ solves the following zero-sum game
\begin{equation*}
\max_{\bm{\pi}_s \in \mathcal{P}(\mathcal{A}_s)} \inf_{(\bm{p}_s, \bm{r}_s)\in \mathcal{Z}_s} \left\{\bm{r}^\top_s \bm{\pi}_s + \bm{p}^\top_s \bm{V}_{t+1,s} \bm{\pi}_s\right\}.
\end{equation*}
Thus $\bm{\pi}_s$ is any $\mathcal{S}$-robust randomized action, and hence $\bm{\pi}$ can be any $\mathcal{S}$-robust strategy. In Step~2, we know that there exists $\mathbb{P}_s^*\in \mathcal{G}_s$ satisfying $\mathbb{E}_{\mathbb{P}_s^*}[(\bmt{p}_s, \bmt{r}_s)]=(\bm{p}^*_s, \bm{r}^*_s)$. Letting $\mathbb{P}^*=\bigotimes_{s\in \mathcal{S}}\mathbb{P}^*_s$ and using Lemma~3.2 in~\cite{Xu_Mannor_2012}, we obtain
\begin{equation*}
\max_{\bm{\pi}\in\Pi^{\text{HR}}}w(\bm{\pi},\mathbb{P}^*,s_1)=\max_{\bm{\pi}\in \Pi^{\text{HR}}}u(\bm{\pi},\bm{p}^*,\bm{r}^*,s_1),
~w(\bm{\pi}^*,\mathbb{P}^*,s_1)=u(\bm{\pi}^*,\bm{p}^*,\bm{r}^*,s_1)
\end{equation*}
and 
\begin{equation*}
\inf_{\mathbb{P}\in \mathcal{G_S}}w(\bm{\pi},\mathbb{P},s_1)=\displaystyle\inf_{(\mathbf{p}, \mathbf{r})\in\bigotimes_{s\in \mathcal{S}}\mathcal{Z}_s}u(\bm{\pi}^*,\bm{p},\bm{r},s_1).
\end{equation*}
As a result,
\begin{equation*}
\max_{\bm{\pi}\in \Pi^{\text{HR}}}w(\bm{\pi},\mathbb{P}^*,s_1)=w(\bm{\pi}^*,\mathbb{P}^*,s_1) =\inf_{\mathbb{P}\in \mathcal{G_S}}w(\bm{\pi}^*,\mathbb{P},s_1).
\end{equation*}
Therefore, the second part of \cref{theorem:S robust} holds, and the first part also follows immediately.
\end{proof}
%\hfill \Halmos
%\endproof

Leveraging \cref{theorem:S robust}, we now investigate the computational aspect of the $\mathcal{S}$-robust randomized action, which can be obtained from solving a sequence of classical robust optimization problems.

\begin{theorem}
\label{theorem:dro reformulation}
For all $ s \in \mathcal{S}_t $, $ t < T $, the $\mathcal{S}$-robust randomized action is the optimal solution to the following optimization problem 
\begin{equation}
\label{prob:S-robust}
\max_{\bm{\pi}_s \in \mathcal{P}(\mathcal{A}_s)} \inf_{\mathbb{P} \in \mathcal{G}_s} \ep{\bmt{r}^\top_s \bm{\pi}_s + \bmt{p}^\top_s \bm{V}_{t+1,s} \bm{\pi}_s},
\end{equation}
or equivalently, to the following classical robust optimization problem
\begin{equation}\label{prob:s-robust}
\begin{array}{rcll}
&\max & -\delta \\
&{\rm s.t.} & \delta \geq \bm{\alpha}^\top\bm{\omega} + \displaystyle \sum_{j \in [J_s]} (\bm{\beta}^\top_j\bm{\mu}_j + \bm{\gamma}_j^\top\bm{\nu}_j) ~~\forall (\bm{\mu}, \bm{\nu}, \bm{\omega}) \in \mathcal{V} \\
&& 
\begin{array}{ll}
\alpha_n + & \displaystyle \sum_{j \in \mathcal{J}_n}(\bm{\beta}_j^\top(\bm{p}_s, \bm{r}_s) + \bm{\gamma}_j^\top\bm{g}_{jn}(\bm{p}_s, \bm{r}_s)) \\
&\geq -\bm{r}^\top_s \bm{\pi}_s - \bm{p}^\top_s \bm{V}_{t+1,s} \bm{\pi}_s ~~\forall (\bm{p}_s, \bm{r}_s) \in \mathcal{D}_n, ~n \in [N_s] 
\end{array}\\
&& \bm{\pi}_s \in \mathcal{P}(\mathcal{A}_s), ~\delta \in \mathbb{R}, ~\bm{\alpha} \in \mathbb{R}^{N_s}, \bm{\beta}_j \in \mathbb{R}^{I_{s1} + I_{s2}}, ~\bm{\gamma}_j \in \mathbb{R}^{M_j}_+ ~~\forall j \in [J_s],
\end{array}
\end{equation}
where $\bm{\mu} \triangleq (\bm{\mu}_j)_{j \in [J_s]}$, $\bm{v} \triangleq (\bm{\nu}_j)_{j \in [J_s]}$ and
$$
\mathcal{V} \triangleq \left\{(\bm{\mu}, \bm{\nu}, \bm{\omega}) ~\left|~ \bm{\omega} \in \mathcal{W}, ~\dfrac{(\bm{\mu}_j, \bm{\nu}_j)}{\sum_{i \in \mathcal{N}_j}\omega_i} \in \mathcal{U}_j ~\forall j \in [J_s] \right.\right\}.
$$
\end{theorem}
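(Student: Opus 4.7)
The plan breaks naturally into two pieces. The first claim, that the $\mathcal{S}$-robust randomized action coincides with the optimizer of problem~(\ref{prob:S-robust}), is just a restatement of Definition~\ref{def}(2), so no work is required there. The substance of the theorem is the equivalence between~(\ref{prob:S-robust}) and the classical robust optimization reformulation~(\ref{prob:s-robust}), and the main tool for establishing this equivalence is conic (Lagrangian) duality applied to the inner infimum $\inf_{\mathbb{P} \in \mathcal{G}_s} \ep{\bmt{r}^\top_s \bm{\pi}_s + \bmt{p}^\top_s \bm{V}_{t+1,s} \bm{\pi}_s}$, viewed as a generalized moment problem on a lifted space of nonnegative measures.

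To set up the dual, I would parametrize each $\mathbb{P} \in \mathcal{G}_s$ by a tuple of non-normalized sub-measures $(F_n)_{n \in [N_s]}$, with $F_n$ supported on $\mathcal{D}_n$ and $F_n(\mathcal{D}_n) = \omega_n$, together with auxiliary parameters $(\bm{\mu}, \bm{\nu}, \bm{\omega}) \in \mathcal{V}$. In these variables the defining constraints of $\mathcal{G}_s$ are linear: $F_n(\mathcal{D}_n) = \omega_n$, $\sum_{n \in \mathcal{N}_j} \int (\bm{p}_s, \bm{r}_s)\, dF_n = \bm{\mu}_j \sum_{i \in \mathcal{N}_j} \omega_i$, and $\sum_{n \in \mathcal{N}_j} \int \bm{g}_{jn}(\bm{p}_s, \bm{r}_s)\, dF_n \leq \bm{\nu}_j \sum_{i \in \mathcal{N}_j} \omega_i$. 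Assigning multipliers $\alpha_n$ to the mass constraints, $\bm{\beta}_j$ to the mean constraints, and $\bm{\gamma}_j \geq \bm{0}$ to the generalized-moment inequalities, the Lagrangian is minimized over each nonnegative measure $F_n$; by standard arguments the inner minimum is finite if and only if the integrand is nonnegative pointwise on $\mathcal{D}_n$, which is precisely the semi-infinite constraint in~(\ref{prob:s-robust}). The remaining term involving $(\bm{\mu}, \bm{\nu}, \bm{\omega})$ contributes the objective inequality $\delta \geq \bm{\alpha}^\top \bm{\omega} + \sum_{j}(\bm{\beta}_j^\top \bm{\mu}_j + \bm{\gamma}_j^\top \bm{\nu}_j)$ required to hold for every $(\bm{\mu}, \bm{\nu}, \bm{\omega}) \in \mathcal{V}$.

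The final step is to remove the duality gap. This is where Assumption~\ref{assu_slater} enters: Slater's condition on the conic system~(\ref{eq:slater condition}) yields, via Theorem~1.4.2 of Ben-Tal and Nemirovski, strong duality between the lifted moment problem and its conic dual. The compactness and closedness of the sets $\mathcal{D}_n$ and $\mathcal{U}_j$, together with the fact that $\mathcal{W} \subseteq \mathrm{int}\{\bm{w} \mid \bm{e}^\top \bm{w} = 1\}$ (so scenario weights are strictly positive and conditional expectations are well-defined), guarantee that the relevant suprema are attained and all the above manipulations are justified uniformly in the outer variable $\bm{\pi}_s$. Coupling this dual with the outer maximization over $\bm{\pi}_s \in \mathcal{P}(\mathcal{A}_s)$ yields exactly~(\ref{prob:s-robust}).

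The principal obstacle I expect is the bookkeeping between sub-measures $F_n$ and the auxiliary parameters in $\mathcal{V}$, in particular the $\sum_{i \in \mathcal{N}_j} \omega_i$ scaling that couples the conditional moment data to the scenario weights; getting this factor right is what produces the set $\mathcal{V}$ as stated, rather than a product of $\mathcal{U}_j$ with $\mathcal{W}$. A secondary technical point is verifying that Slater's condition on the lifted conic system~(\ref{eq:slater condition})---which at first glance concerns abstract variables $(\bm{\tau}, \bm{\xi}, \bm{\zeta}, \bm{\phi}, \bm{\varphi})$ rather than distributions---is exactly what is needed to invoke strong duality for the moment problem over $\mathcal{G}_s$. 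This mirrors the standard correspondence in the distributionally robust optimization literature between interior feasibility of a homogenized conic system and qualification of the generalized moment dual.
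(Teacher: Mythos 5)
Your proposal follows essentially the same route as the paper's proof: an epigraph reformulation of \cref{prob:S-robust}, Lagrangian/conic duality applied to the lifted moment problem over $\mathcal{G}_s$ with multipliers $\alpha_n$, $\bm{\beta}_j$, $\bm{\gamma}_j \geq \bm{0}$ whose pointwise dual feasibility conditions on $\mathcal{D}_n$ become the semi-infinite constraints of \cref{prob:s-robust}, strong duality secured by the Slater condition of \cref{assu_slater}, and the rescaling $(\bm{\mu}_j,\bm{\nu}_j)\mapsto \big(\sum_{i\in\mathcal{N}_j}\omega_i\big)(\bm{\mu}_j,\bm{\nu}_j)$ that removes the bilinear coupling and produces the set $\mathcal{V}$ --- exactly the bookkeeping point you flag. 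The only organizational difference is that the paper dualizes the moment problem for fixed $(\bm{\mu},\bm{\nu},\bm{\omega})$ and then exchanges the outer supremum with the dual infimum via Sion's minimax theorem before citing Theorem~2 of \cite{Chen_Sim_Xiong_2017} to close the gap, whereas you dualize the fully lifted problem in one pass; both hinge on the same qualification and yield the same reformulation.
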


\begin{proof}[Proof]
%\noindent{\emph{Proof of Theorem~\ref{theorem:dro reformulation}.}}$\;$ 
We rewrite problem~\cref{prob:S-robust} for determining the $\mathcal{S}$-robust randomized action into
$$
\begin{array}{cll}
\max & \eta \\
{\rm s.t.} & \displaystyle \inf_{\mathbb{P} \in \mathcal{G}_s} \ep{\bmt{r}^\top_s \bm{\pi}_s + \bmt{p}^\top_s \bm{V}_{t+1,s} \bm{\pi}_s} \geq \eta \\
& \eta \in \mathbb{R}, ~\bm{\pi}_s \in \mathcal{P}(\mathcal{A}_s).
\end{array}
$$
Changing variables from $ \eta $ to $ -\eta $, we obtain
\begin{equation}
\label{prob:dro}
\begin{array}{rcll}
&\max & -\eta \\
&{\rm s.t.} & \displaystyle\sup_{\mathbb{P} \in \mathcal{G}_s} \ep{-\bmt{r}^\top_s \bm{\pi}_s - \bmt{p}^\top_s \bm{V}_{t+1,s} \bm{\pi}_s} \leq \eta \\
&& \eta \in \mathbb{R}, ~\bm{\pi}_s \in \mathcal{P}(\mathcal{A}_s).
\end{array}
\end{equation}
Let $\mathcal{U} \triangleq \{(\bm{\mu}, \bm{\nu}) \mid (\bm{\mu}_j, \bm{\nu}_j) \in \mathcal{U}_j ~\forall j \in [J_s]\}$. The worst-case expectation on the left-hand side of the first constraint in~\cref{prob:dro} can be re-expressed as follows:
$$
\lambda^* = \sup_{(\bm{\mu}, \bm{\nu}, \bm{\omega}) \in \mathcal{U} \times \mathcal{W}} \lambda(\bm{\mu}, \bm{\nu}, \bm{\omega}),  
$$
where given $(\bm{\mu}, \bm{\nu}, \bm{\omega}) \in \mathcal{U} \times \mathcal{W}$, we define an ambiguity set
\begin{equation*}
\begin{array}{l}
\mathcal{G}_s(\bm{\mu}, \bm{\nu}, \bm{\omega}) \triangleq \\
\left\{\mathbb{P} \in \mathcal{P}_0(\mathbb{R}^{I_{s1}} \times \mathbb{R}^{I_{s2}} \times [N_s]) ~\left\vert~
\begin{array}{ll}
(\bmt{p}_s, \bmt{r}_s, \tilde{n}_s) \sim \mathbb{P} \\
\ep{(\bmt{p}_s, \bmt{r}_s) \mid \tilde{n}_s \in \mathcal{N}_j} = \bm{\mu}_j &~\forall j \in [J_s] \\
\ep{\bm{g}_{j\tilde{n}_s}(\bmt{p}_s, \bmt{r}_s) \mid \tilde{n}_s \in \mathcal{N}_j} \leq \bm{\nu}_j &~\forall j \in [J_s] \\
\pp{(\bmt{p}_s, \bmt{r}_s) \in \mathcal{D}_{n} \mid \tilde{n}_s = n} = 1 &~\forall n \in [N_s] \\
\pp{\tilde{n}_s = n} = \omega_n &~\forall n \in [N_s] 
\end{array}
\right.\right\}
\end{array}
\end{equation*}
and correspondingly the worst-case expectation
$$
\lambda(\bm{\mu}, \bm{\nu}, \bm{\omega}) \triangleq \sup_{\mathbb{P} \in \mathcal{G}_s(\bm{\mu}, \bm{\nu}, \bm{\omega})} \ep{-\bmt{r}^\top_s \bm{\pi}_s - \bmt{p}^\top_s \bm{V}_{t+1,s} \bm{\pi}_s}.
$$

Using the law of total probability, we can represent the joint distribution $\mathbb{P}$ of $ (\bmt{p}_s, \bmt{r}_s, \tilde{n}_s) $ as the marginal distribution $\mathbb{P}^\dag $ of $ \n_s $ supported on $[N_s]$ and the conditional distributions $ \mathbb{P}_n$ of $(\bmt{p}_s, \bmt{r}_s)$ given $ \n_s = n$, $n \in [N_s]$. Let $\bar{\omega}_j =\sum_{i \in \mathcal{N}_j} \omega_i$ for all $j \in [J_s]$, we can now reformulate $\lambda(\bm{\mu}, \bm{\nu}, \bm{\omega})$ as:
\begin{equation}
\nonumber
\begin{array}{rcll}
\lambda(\bm{\mu}, \bm{\nu}, \bm{\omega}) = &\sup &\displaystyle\sum_{n \in [N_s]} \omega_n \epn{-\bmt{r}^\top_s \bm{\pi}_s - \bmt{p}^\top_s \bm{V}_{t+1,s} \bm{\pi}_s} \\
&{\rm s.t.}& \displaystyle\sum_{n \in \mathcal{N}_j} \omega_n \epn{(\bmt{p}_s, \bmt{r}_s)} = \bar{\omega}_j\bm{\mu}_j &~\forall j \in [J_s] \\
&& \displaystyle\sum_{n \in \mathcal{N}_j} \omega_n \epn{\bm{g}_{jn}(\bmt{p}_s, \bmt{r}_s)} \leq \bar{\omega}_j\bm{\nu}_j &~\forall j \in [J_s] \\
&& \ppn{(\bmt{p}_s, \bmt{r}_s) \in \mathcal{D}_n} = 1 &~\forall n \in [N_s].
\end{array}
\end{equation}
The dual of $ \lambda(\bm{\mu}, \bm{\nu}, \bm{\omega}) $ is denoted as $\lambda_{\rm D}(\bm{\mu}, \bm{\nu}, \bm{\omega})$, which is given by
\begin{equation}
\nonumber
\begin{array}{ll}
&\left\{
\begin{array}{cll}
\inf & \displaystyle \sum_{n \in [N_s]} \alpha_n + \displaystyle \sum_{j \in [J_s]}\bar{\omega}_j(\bm{\beta}^\top_j\bm{\mu}_j + \bm{\gamma}_j^\top\bm{\nu}_j) \\
{\rm s.t.}& 
\begin{array}{ll}
\alpha_n + &\omega_n \displaystyle \sum_{j \in \mathcal{J}_n} (\bm{\beta}_j^\top(\bm{p}_s, \bm{r}_s) + \bm{\gamma}_j^\top\bm{g}_{jn}(\bm{p}_s, \bm{r}_s)) \\
&\geq \omega_n (-\bm{r}^\top_s \bm{\pi}_s - \bm{p}^\top_s \bm{V}_{t+1,s} \bm{\pi}_s) ~~\forall (\bm{p}_s, \bm{r}_s) \in \mathcal{D}_n, ~n \in [N_s]
\end{array} \\
& \bm{\alpha} \in \mathbb{R}^{N_s}, ~\bm{\beta}_j \in \mathbb{R}^{I_{s1} + I_{s2}}, ~\bm{\gamma}_j \in \mathbb{R}^{M_j}_+ ~~\forall j \in [J_s]\\[3mm]
\end{array}
\right. \\
= &\left\{
\begin{array}{cll}
\inf & \bm{\alpha}^\top\bm{\omega} + \displaystyle \sum_{j \in [J_s]}\bar{\omega}_j(\bm{\beta}_j^\top\bm{\mu}_j + \bm{\gamma}_j^\top\bm{\nu}_j) \\
{\rm s.t.}& 
\begin{array}{ll}
\alpha_n + & \displaystyle \sum_{j \in \mathcal{J}_n} (\bm{\beta}_j^\top(\bm{p}_s, \bm{r}_s) + \bm{\gamma}_j^\top\bm{g}_{jn}(\bm{p}_s, \bm{r}_s)) \\
&\geq -\bm{r}^\top_s \bm{\pi}_s - \bm{p}^\top_s \bm{V}_{t+1,s} \bm{\pi}_s ~~\forall (\bm{p}_s, \bm{r}_s) \in \mathcal{D}_n, ~n \in [N_s] 
\end{array}\\
& \bm{\alpha} \in \mathbb{R}^{N_s}, ~\bm{\beta}_j \in \mathbb{R}^{I_{s1} + I_{s2}}, ~\bm{\gamma}_j \in \mathbb{R}^{M_j}_+ ~~\forall j \in [J_s],
\end{array} 
\right.
\end{array}
\end{equation}
where the equality follows from for all $n \in [N_s]$, first changing variable from $\alpha_n$ to $\omega_n \alpha_n$ and then dividing both sides of the constraint by $\omega_n$, which is allowed since $\mathcal{W} \subseteq {\rm int}\{\bm{\omega} \in \mathbb{R}^{N_s}_+ \mid \bm{e}'\bm{\omega} = 1\}$. By weak duality, $\lambda(\bm{\mu}, \bm{\nu}, \bm{\omega}) \leq \lambda_{\rm D}(\bm{\mu}, \bm{\nu}, \bm{\omega})$. By the general min-max theorem (see \cite{Sion_1958}), we have
$$
\lambda_{\rm D}^* = \displaystyle\sup_{(\bm{\mu}, \bm{\nu}, \bm{\omega}) \in \mathcal{U} \times \mathcal{W}} \lambda_{\rm D}(\bm{\mu}, \bm{\nu}, \bm{\omega}) \leq \lambda_{\rm P}^*,
$$
where the min-max problem $\lambda_{\rm P}^*$ corresponds to the max-min problem $\lambda_{\rm D}^*$ is
\begin{equation}
\label{prob:lambda dual}
\lambda_{\rm P}^* 
\triangleq 
\left\{
\begin{array}{cll}
\inf & \delta \\
{\rm s.t.} & \delta \geq \bm{\alpha}^\top\bm{\omega} + \displaystyle \sum_{j \in [J_s]} \bar{\omega}_j(\bm{\beta}^\top_j\bm{\mu}_j + \bm{\gamma}_j^\top\bm{\nu}_j) ~~\forall \bm{\omega} \in \mathcal{W}, (\bm{\mu}_j, \bm{\nu}_j) \in \mathcal{U}_j, ~j \in [J_s]\\ 
& 
\begin{array}{ll}
\alpha_n + &\displaystyle \sum_{j \in \mathcal{J}_n}(\bm{\beta}_j^\top(\bm{p}_s, \bm{r}_s) + \bm{\gamma}_j^\top\bm{g}_{jn}(\bm{p}_s, \bm{r}_s)) \\
&\geq -\bm{r}^\top_s \bm{\pi}_s - \bm{p}^\top_s \bm{V}_{t+1,s} \bm{\pi}_s ~~\forall (\bm{p}_s, \bm{r}_s) \in \mathcal{D}_n, ~n \in [N_s] 
\end{array}\\
& \delta \in \mathbb{R}, ~\bm{\alpha} \in \mathbb{R}^{N_s}, ~\bm{\beta}_j \in \mathbb{R}^{I_{s1} + I_{s2}}, ~\bm{\gamma}_j \in \mathbb{R}^{M_j}_+ ~~\forall j \in [J_s].
\end{array}
\right.
\end{equation}
Due to the presence of products of uncertain variables (e.g., $\bar{\omega}_j\bm{\beta}_j^\top\bm{\mu}_j$), problem~\cref{prob:lambda dual} is nonconvex. Since $\bm{\omega}>0$ (and hence $\bar{\omega}_j >0$), an equivalent convex representation can be obtained by changing variables in problem~\cref{prob:lambda dual} from $ \bar{\omega}_j(\bm{\mu}_j, \bm{\nu}_j) $ to $(\bm{\mu}_j, \bm{\nu}_j)$ for all $ j \in [J_s] $:
\begin{equation*}
\lambda_{\rm P}^* = 
\left\{
\begin{array}{rcll}
\inf & \delta \\
&{\rm s.t.} & \delta \geq \bm{\alpha}^\top\bm{\omega} + \displaystyle \sum_{j \in [J_s]} (\bm{\beta}^\top_j\bm{\mu}_j + \bm{\gamma}_j^\top\bm{\nu}_j) ~~\forall (\bm{\mu}, \bm{\nu}, \bm{\omega}) \in \mathcal{V} \\ 
&& 
\begin{array}{ll}
\alpha_n + &\displaystyle \sum_{j \in \mathcal{J}_n}(\bm{\beta}_j^\top(\bm{p}_s, \bm{r}_s) + \bm{\gamma}_j^\top\bm{g}_{jn}(\bm{p}_s, \bm{r}_s)) \\
&\geq -\bm{r}^\top_s \bm{\pi}_s - \bm{p}^\top_s \bm{V}_{t+1,s} \bm{\pi}_s ~~\forall (\bm{p}_s, \bm{r}_s) \in \mathcal{D}_n, ~n \in [N_s] 
\end{array}\\
&& \delta \in \mathbb{R}, ~\bm{\alpha} \in \mathbb{R}^{N_s}, ~\bm{\beta}_j \in \mathbb{R}^{I_{s1} + I_{s2}}, ~\bm{\gamma}_j \in \mathbb{R}^{M_j}_+ ~~\forall j \in [J_s].
\end{array}
\right.
\end{equation*}

In fact, it can be shown that under \cref{assu_slater}, $ \lambda^* = \lambda^*_{\rm D} = \lambda^*_{\rm P} $ (see Theorem~2 in \cite{Chen_Sim_Xiong_2017}). Therefore, by re-inserting $\lambda^*_{\rm P}$ (that is, $\lambda^*$) into~\cref{prob:dro} and eliminating the decision variable $\eta$, we the desired reformulation~\cref{prob:s-robust} to solve the $ \mathcal{S}$-robust randomized action.
\end{proof}
%\hfill $\Halmos$
%\endproof

\section{Infinite Horizon Case}\label{section5}
In this section, we study distributionally robust MDPs in the discounted-reward infinite horizon setup. In particular, we 
generalize the notion of $\mathcal{S}$-robust strategy proposed in  \cref{section3} to infinite-horizon case and show that the $\mathcal{S}$-robust strategy is distributionally robust.

Unlike the finite horizon case, we cannot model the system as (i) having finitely many states, and (ii) each visited at most once. In contrast, we have to relax either one of these two assumptions, which lead to two different natural formulations. Similarly to \cite{Xu_Mannor_2012} and \cite{Yu_Xu_2016}, we consider two models, namely, the non-stationary model and the stationary model. These two formulations can model different setups: if the system, more specifically the probability distribution of uncertain parameters, evolves with time, then the non-stationary model is more appropriate; while if the system is static, then a stationary model is preferable. For an in-depth discussion of the distinction between non-stationary and stationary models, we refer the readers to \cite{Nilim_Ghaoui_2005}.

The $\mathcal{S}$-robust strategy for the infinite horizon distributionally robust MDP is defined as follows.
\begin{definition}\label{definition4}
Given a distributionally robust MDP $\langle \mathcal{S}, \mathcal{A}, T, \gamma, \mathcal{G}_{\mathcal{S}}\rangle$ with $T=\infty$, we define the $\mathcal{S}$-robust problem through the following:
\begin{enumerate}
\item For all $s\in\mathcal{S}$, the $\mathcal{S}$-robust value ${v}_{\infty}(s)$ and $\mathcal{S}$-robust randomized action $\bm{\pi}^*_s$ are defined as
$${v}_{\infty}(s)\triangleq \displaystyle \max_{\bm{\pi}_s \in \mathcal{P}(\mathcal{A}_s)} \inf_{\mathbb{P} \in \mathcal{G}_s} \ep{\sum_{a \in \mathcal{A}_s} \pi_s(a) \left(\tilde{r}(s, a) + \gamma\sum_{s' \in \mathcal{S}} \tilde{p}(s' \vert s, a) v_{\infty}(s')\right)},$$
$$
{\bm{\pi}}^*_s\in\argmax_{\bm{\pi}_s\in\mathcal{P}(\mathcal{A}_s)}\inf_{\mathbb{P}\in \mathcal{G}_s}\ep{\sum_{a \in \mathcal{A}_s} \pi_s(a) \left(\tilde{r}(s, a) + \gamma\sum_{s' \in \mathcal{S}} \tilde{p}(s' \vert s, a) v_{\infty}(s')\right)}.
$$
					
\item A strategy ${\bm{\pi}}^*$ is a $\mathcal{S}$-robust strategy if ${\bm{\pi}}_s^*$ is a $\mathcal{S}$-robust randomized action for all $ s\in \mathcal{S}$.
\end{enumerate}
\end{definition}

To see that the $\mathcal{S}$-robust strategy is well defined, it suffices to show that the following operator $\mathcal{L}:\mathbb{R}^{|\mathcal{S}|}\rightarrow\mathbb{R}^{|\mathcal{S}|}$ is a $\gamma$-contraction with respect to $\|\cdot\|_{\infty}$ norm, i.e.,  $\|\mathcal{L}\bm{v}_1-\mathcal{L}\bm{v}_2\|_\infty\leq\gamma\|\bm{v}_1-\bm{v}_2\|_\infty$ for any $\bm{v}_1,\bm{v}_2\in\mathbb{R}^{|\mathcal{S}|}$. We define
$$\{\mathcal{L}\bm{v}\}(s)\triangleq\max_{\bm{\pi}_s\in\mathcal{P}(\mathcal{A}_s)}\inf_{\mathbb{P}\in \mathcal{G}_s}\{\mathcal{L}^{\bm{\pi}_s}_{\mathbb{P}}\bm{v}\}(s),$$ 
where for fixed $\bm{\pi}_s\in\mathcal{P}(\mathcal{A}_s)$ and $\mathbb{P}\in \mathcal{G}_s$,
\begin{equation*}
\begin{aligned}
\{\mathcal{L}_{\mathbb{P}}^{\bm{\pi}_s}\bm{v}\}(s)&\triangleq\ep{\sum_{a \in \mathcal{A}_s} \pi_s(a) \left(\tilde{r}(s, a) + \gamma\sum_{s' \in \mathcal{S}} \tilde{p}(s' \vert s, a) v(s')\right)}\\
&=\sum_{a\in \mathcal{A}_s}\pi_s(a)r(s,a)+\gamma\sum_{a\in \mathcal{A}_s}\sum_{s'\in \mathcal{S}}\pi_s(a)p(s'|s,a)v(s').
\end{aligned}
\end{equation*}

\begin{lemma}\label{lem3}
Under \cref{assu_mdp}, $\mathcal{L}$ is a $\gamma$-contraction with respect to $\|\cdot\|_{\infty}$ norm.
\end{lemma}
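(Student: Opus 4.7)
The plan is to bound $|\{\mathcal{L}\bm{v}_1\}(s) - \{\mathcal{L}\bm{v}_2\}(s)|$ uniformly in $s \in \mathcal{S}$ by $\gamma \|\bm{v}_1 - \bm{v}_2\|_\infty$ using the standard two-step reduction used to prove contractivity of max--min Bellman operators: first swap the outer max--min using a single maximizer/minimizer to reduce to a pointwise comparison, and then exploit the fact that for any fixed policy and any fixed distribution over the random parameters, the inner operator $\mathcal{L}^{\bm{\pi}_s}_{\mathbb{P}}$ is itself a $\gamma$-contraction, essentially because transition probability vectors sum to one and the discount factor multiplies the value dependence.

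More concretely, fix $s \in \mathcal{S}$ and, without loss of generality, assume $\{\mathcal{L}\bm{v}_1\}(s) \ge \{\mathcal{L}\bm{v}_2\}(s)$. I would pick $\bm{\pi}^\star_s \in \mathcal{P}(\mathcal{A}_s)$ achieving the outer maximum for $\bm{v}_1$ (which exists as $\mathcal{P}(\mathcal{A}_s)$ is compact and the inner infimum is attainable under \cref{assu_slater} by Step~2 of the proof of \cref{theorem:S robust}, which shows $\mathcal{G}_s$ is compact and the expectation is linear), and then pick $\mathbb{P}^\star \in \mathcal{G}_s$ achieving the inner infimum associated with $\bm{v}_2$ and $\bm{\pi}^\star_s$. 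This yields
\begin{equation*}
0 \le \{\mathcal{L}\bm{v}_1\}(s) - \{\mathcal{L}\bm{v}_2\}(s) \le \{\mathcal{L}^{\bm{\pi}^\star_s}_{\mathbb{P}^\star}\bm{v}_1\}(s) - \{\mathcal{L}^{\bm{\pi}^\star_s}_{\mathbb{P}^\star}\bm{v}_2\}(s),
\end{equation*}
reducing the task to bounding the right-hand side for arbitrary $\bm{\pi}_s$ and $\mathbb{P}$. If the attainment of either extremum were in question, I would replace $\bm{\pi}^\star_s$ and $\mathbb{P}^\star$ by $\varepsilon$-optimal choices and let $\varepsilon \downarrow 0$.

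For the pointwise step, using the definition of $\mathcal{L}^{\bm{\pi}_s}_{\mathbb{P}}$ the reward term cancels and I obtain
\begin{equation*}
\{\mathcal{L}^{\bm{\pi}_s}_{\mathbb{P}}\bm{v}_1\}(s) - \{\mathcal{L}^{\bm{\pi}_s}_{\mathbb{P}}\bm{v}_2\}(s) = \gamma\, \mathbb{E}_{\mathbb{P}}\!\left[\sum_{a \in \mathcal{A}_s} \pi_s(a) \sum_{s' \in \mathcal{S}} \tilde{p}(s'|s,a)\bigl(v_1(s') - v_2(s')\bigr)\right].
\end{equation*}
Bounding $|v_1(s') - v_2(s')| \le \|\bm{v}_1 - \bm{v}_2\|_\infty$ pointwise, pulling the norm out of the expectation, and using the fact that $\sum_{s'} \tilde{p}(s'|s,a) = 1$ almost surely (encoded in every $\mathcal{D}_n$, since transition vectors are probability vectors) together with $\sum_a \pi_s(a) = 1$ gives the bound $\gamma \|\bm{v}_1 - \bm{v}_2\|_\infty$. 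Taking the supremum over $s$ yields $\|\mathcal{L}\bm{v}_1 - \mathcal{L}\bm{v}_2\|_\infty \le \gamma \|\bm{v}_1 - \bm{v}_2\|_\infty$.

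The main obstacle is subtle rather than technical: one needs the interchange of $\max$ and $\inf$ only in a one-sided form (which is trivially valid), so no minimax theorem is needed here; however, one must be careful that the same $\bm{\pi}^\star_s$ and $\mathbb{P}^\star$ can be used against both $\bm{v}_1$ and $\bm{v}_2$ without loss. The chosen order (maximizer of $\bm{v}_1$'s problem paired with minimizer of $\bm{v}_2$'s problem) gives a valid upper bound because restricting the max to $\bm{\pi}^\star_s$ can only decrease $\{\mathcal{L}\bm{v}_2\}(s)$, and restricting the inf to $\mathbb{P}^\star$ can only increase $\{\mathcal{L}\bm{v}_1\}(s)$'s corresponding inner value---both working in the desired direction. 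The existence of optimizers from \cref{assu_slater} and the compactness established in the proof of \cref{theorem:S robust} make this direct; otherwise the same argument still works with $\varepsilon$-optimizers.
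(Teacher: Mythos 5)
Your proof is correct and follows essentially the same route as the paper: reduce $\{\mathcal{L}\bm{v}_1\}(s)-\{\mathcal{L}\bm{v}_2\}(s)$ to a difference under a single common pair $(\bm{\pi}_s,\mathbb{P})$ via one-sided optimality inequalities, then use that $\mathcal{L}^{\bm{\pi}_s}_{\mathbb{P}}$ is a pointwise $\gamma$-contraction because the reward terms cancel and the transition and action probabilities each sum to one. The only cosmetic difference is that the paper invokes saddle points of both problems while you pair the maximizer of the first problem with the inner minimizer of the second (with an $\varepsilon$-optimal fallback), which is slightly more economical but yields the same bound.
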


\begin{proof}[Proof]
%\noindent{\emph{Proof of Lemma~\ref{lem3}.}} $\;$ 
Given any $\bm{v}_1,\bm{v}_2\in\mathbb{R}^{|\mathcal{S}|}$ and any $s\in \mathcal{S}$, let $(\bm{\pi}_s^1,\mathbb{P}_1)$ and $(\bm{\pi}_s^2,\mathbb{P}_2)$ be the respective saddle points for $\{\mathcal{L}\bm{v}_1\}(s)$ and $\{\mathcal{L}\bm{v}_2\}(s)$. Suppose that $\{\mathcal{L}\bm{v}_1\}(s)\geq\{\mathcal{L}\bm{v}_2\}(s)$, we have
\begin{equation*}
	\begin{array}{l l}
		0&\leq\{\mathcal{L}\bm{v}_1\}(s)-\{\mathcal{L}\bm{v}_2\}(s)\\&=\{\mathcal{L}^{\bm{\pi}^1_s}_{\mathbb{P}_1}\bm{v}_1\}(s)-\{\mathcal{L}^{\bm{\pi}^2_s}_{\mathbb{P}_2}\bm{v}_2\}(s)\\
		&\leq\{\mathcal{L}^{\bm{\pi}^1_s}_{\mathbb{P}_2}\bm{v}_1\}(s)-\{\mathcal{L}^{\bm{\pi}^1_s}_{\mathbb{P}_2}\bm{v}_2\}(s)\\
		&=\gamma\displaystyle\sum_{a\in \mathcal{A}_s}\sum_{s'\in \mathcal{S}}\pi_s^1(a)p(s'|s,a)(v_1(s')-v_2(s'))\\
		&\leq\gamma\displaystyle\sum_{a\in \mathcal{A}_s}\sum_{s'\in \mathcal{S}}\pi_s^1(a)p(s'|s,a)\|\bm{v}_1-\bm{v}_2\|_\infty\\
		&=\gamma\|\bm{v}_1-\bm{v}_2\|_\infty;
	\end{array}
\end{equation*}
here the last equality follows from the fact that $\bm{\pi}_s^1$ is a vector on the probability simplex $ \mathcal{P}(\mathcal{A}_s)$ and $\sum_{s'\in \mathcal{S}}p(s'|s,a)=1$. Similarly, for $\{\mathcal{L}\bm{v}_1\}(s)\leq\{\mathcal{L}\bm{v}_2\}(s)$, we arrive at 
$
|\{\mathcal{L}\bm{v}_1\}(s)-\{\mathcal{L}\bm{v}_2\}(s)|\leq\gamma\|\bm{v}_1-\bm{v}_2\|_\infty 
$
for all $s\in \mathcal{S}$. Taking the supremum over $s$ yields the desired result.
\end{proof}
%\hfill \Halmos
%\endproof
Note that given $\bm{v}$ and $s$, by applying \cref{theorem:dro reformulation}, the $\mathcal{S}$-robust strategy can be obtained efficiently. Banach Fixed-Point Theorem states that, there exists a unique $\bm{v}_{\infty}^*$ such that $\mathcal{L}\bm{v}^*_{\infty}=\bm{v}^*_{\infty}$, which is the $\mathcal{S}$-robust value by definition. Moreover, for an arbitrary $\bm{v}^0$, the value vector sequence $\{v^n\}$ defined by $v^{n+1}=\mathcal{L}v^n=\mathcal{L}^{n+1}\bm{v}^0$ converges to $\bm{v}_{\infty}^*$ at exponential rate (see Theorem 6.2.3. in \cite{Puterman_2014}). Therefore, as the following lemma shows, we can compute the $\mathcal{S}$-robust randomized action for each $s$ (and hence $\mathcal{S}$-robust strategy) using this procedure.

\begin{lemma}\label{lem4}
For $s\in \mathcal{S}$, let $\bm{v}^n=\mathcal{L}^n\bm{v}^0$ for all $n \geq 0$ and 
$$\bm{\pi}^n_s\in\argmax_{\bm{\pi}_s\in\mathcal{P}(\mathcal{A}_s)}\inf_{\mathbb{P}\in \mathcal{G}_s}\ep{\sum_{a \in \mathcal{A}_s} \pi_s(a) \left(\tilde{r}(s, a) + \gamma\sum_{s' \in \mathcal{S}} \tilde{p}(s' \vert s, a) v^n(s')\right)}.$$
Then the sequence $\{\bm{\pi}_s^n\}_{n=1}^{\infty}$ has a convergent subsequence, and any of its limit points is a $\mathcal{S}$-robust randomized action of state $s$.
\end{lemma}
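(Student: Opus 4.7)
The plan is to combine a compactness argument that extracts a convergent subsequence with a joint-continuity argument that identifies any limit point as a maximizer of the one-stage game evaluated at the fixed point $\bm{v}^*_\infty$.

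First, since each $\bm{\pi}^n_s$ lies in the probability simplex $\mathcal{P}(\mathcal{A}_s)$, a closed and bounded subset of $\mathbb{R}^{|\mathcal{A}_s|}$, the Bolzano--Weierstrass theorem immediately yields a subsequence $\{\bm{\pi}^{n_k}_s\}$ converging to some $\bm{\pi}^*_s\in\mathcal{P}(\mathcal{A}_s)$, settling the first assertion.

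For the second assertion, I would define
$$
G(\bm{\pi}_s,\bm{v})\triangleq\inf_{\mathbb{P}\in\mathcal{G}_s}\ep{\bmt{r}^\top_s\bm{\pi}_s+\gamma\,\bmt{p}^\top_s\bm{V}_s(\bm{v})\bm{\pi}_s},
$$
where $\bm{V}_s(\bm{v})$ is built from $\bm{v}=(v(s'))_{s'\in\mathcal{S}}$ in the block-diagonal form used in \cref{def}. The crux is joint continuity of $G$ on $\mathcal{P}(\mathcal{A}_s)\times\mathbb{R}^{|\mathcal{S}|}$. Since every support set $\mathcal{D}_n$ is compact, the pair $(\bmt{p}_s,\bmt{r}_s)$ is bounded by a common constant $M$ almost surely under any $\mathbb{P}\in\mathcal{G}_s$; hence, for $\bm{v}$ restricted to any bounded set, the integrand is jointly Lipschitz in $(\bm{\pi}_s,\bm{v})$ with a Lipschitz constant that depends on $M$ and the $\bm{v}$-bound but \emph{not} on $\mathbb{P}$. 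Since both expectation and pointwise infimum preserve such a uniform Lipschitz bound, $G$ is locally Lipschitz and hence continuous.

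Finally, I would pass to the limit. \cref{lem3} together with the Banach Fixed-Point Theorem gives $\bm{v}^n\to\bm{v}^*_\infty$ in $\|\cdot\|_\infty$, and the recursion $\bm{v}^{n+1}=\mathcal{L}\bm{v}^n$ yields $G(\bm{\pi}^n_s,\bm{v}^n)=\{\mathcal{L}\bm{v}^n\}(s)=v^{n+1}(s)\to v^*_\infty(s)=\max_{\bm{\pi}_s}G(\bm{\pi}_s,\bm{v}^*_\infty)$. Along the convergent subsequence, joint continuity of $G$ from the previous step provides $G(\bm{\pi}^{n_k}_s,\bm{v}^{n_k})\to G(\bm{\pi}^*_s,\bm{v}^*_\infty)$. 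Equating the two limits,
$$
G(\bm{\pi}^*_s,\bm{v}^*_\infty)=\max_{\bm{\pi}_s\in\mathcal{P}(\mathcal{A}_s)}G(\bm{\pi}_s,\bm{v}^*_\infty),
$$
so by \cref{definition4} the limit $\bm{\pi}^*_s$ is an $\mathcal{S}$-robust randomized action at $s$. The main obstacle is establishing the joint continuity of $G$; an alternative route through Berge's maximum theorem, leveraging the compactness of $\mathcal{G}_s$ established in Step~2 of the proof of \cref{theorem:S robust} together with weak continuity of the expectation map, would also work but is heavier than the uniform Lipschitz argument sketched above.
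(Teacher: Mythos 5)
Your proof is correct, and it reaches the same destination as the paper's --- extract a convergent subsequence by compactness of the simplex, then show the limit maximizes the robust one-stage objective at the fixed point $\bm{v}^*_\infty$ --- but the continuity argument is packaged differently. The paper never establishes joint continuity of the map you call $G$; instead it splits the limit passage into two pieces: (i) an optimality-gap sandwich $\inf_{\mathbb{P}}\{\mathcal{L}_{\mathbb{P}}^{\bm{\pi}^n_s}\bm{v}^*_{\infty}\}(s)-\inf_{\mathbb{P}}\{\mathcal{L}_{\mathbb{P}}^{\hat{\bm{\pi}}_s}\bm{v}^*_{\infty}\}(s)\geq-2\gamma\|\bm{v}^n-\bm{v}^*_{\infty}\|_{\infty}$, obtained from the optimality of $\bm{\pi}^n_s$ at $\bm{v}^n$ together with the fact that each $\mathcal{L}_{\mathbb{P}}^{\bm{\pi}_s}$ is a $\gamma$-contraction (so the swap from $\bm{v}^n$ to $\bm{v}^*_\infty$ costs at most $\gamma\|\bm{v}^n-\bm{v}^*_\infty\|_\infty$ uniformly in $\bm{\pi}_s$ and $\mathbb{P}$), and (ii) continuity in $\bm{\pi}_s$ alone of $\inf_{\mathbb{P}}\{\mathcal{L}_{\mathbb{P}}^{\bm{\pi}_s}\bm{v}^*_\infty\}(s)$ along the subsequence, via linearity of $\{\mathcal{L}_{\mathbb{P}}^{\bm{\pi}_s}\bm{v}\}(s)$ in $\bm{\pi}_s$ for fixed $\mathbb{P}$. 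Your uniform-Lipschitz route to joint continuity of $G$ (the infimum of a family of functions that are Lipschitz with a common constant, thanks to compact supports $\mathcal{D}_n$ and boundedness of $\{\bm{v}^n\}$) subsumes both steps at once and lets you exploit the clean identity $G(\bm{\pi}^n_s,\bm{v}^n)=v^{n+1}(s)$, which the paper does not use explicitly; the paper's version buys a slightly more self-contained argument in that it reuses only the contraction property already proved in \cref{lem3} and the linear structure of the objective, at the cost of a more intricate chain of inequalities. Both arguments are complete, and your remark that Berge's maximum theorem would also work (but is heavier) is apt.
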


\begin{proof}[Proof]
%\noindent{\emph{Proof of Lemma~\ref{lem4}.}}$\;$
Since $\mathcal{P}(\mathcal{A}_s)$ is compact, the sequence $\{\bm{\pi}_s^n\}_{n=1}^{\infty}$ has a convergent subsequence. To show that any limiting point is a $\mathcal{S}$-robust randomized action, we first assume $\bm{\pi}_s^n\rightarrow\bm{\pi}_s^*$ without loss of generality. We note that given any $ \mathbb{P} \in \mathcal{G}_s $ and $ \hat{\bm{\pi}}_s \in \mathcal{P}(\mathcal{A}_s) $, the operator $\mathcal{L}_{\mathbb{P}}^{\hat{\bm{\pi}}_s}$ is a $\gamma$-contraction (see \cite{Puterman_2014}).
Thus by the definition of the maximizer $\bm{\pi}^n_s$, for any $\hat{\bm{\pi}}_s\in \mathcal{P}(\mathcal{A}_s)$ and $ \bm{v}^n \in \mathbb{R}^{\vert \mathcal{S} \vert} $, we have 
$$\inf_{\mathbb{P}\in \mathcal{G}_s}\{\mathcal{L}_{\mathbb{P}}^{\hat{\bm{\pi}}_s}\bm{v}^n\}(s)\leq \inf_{\mathbb{P}'\in \mathcal{G}_s}\{\mathcal{L}_{\mathbb{P}'}^{{\bm{\pi}}^n_s}\bm{v}^n\}(s).$$
Moreover, for $\bm{v}^*_{\infty}$ defined by $\mathcal{L}\bm{v}^*_{\infty}=\bm{v}_{\infty}^*$, we have
\begin{equation}\label{equ1}
\begin{array}{ll}
~~\displaystyle\inf_{\mathbb{P}\in \mathcal{G}_s}\{\mathcal{L}_{\mathbb{P}}^{\bm{\pi}^n_s}\bm{v}^*_{\infty}\}(s)-\inf_{\mathbb{P}\in \mathcal{G}_s}\{\mathcal{L}_{\mathbb{P}}^{\hat{\bm{\pi}}_s}\bm{v}^*_{\infty}\}(s)\\
=\Big(\displaystyle\inf_{\mathbb{P}\in \mathcal{G}_s}\{\mathcal{L}_{\mathbb{P}}^{\bm{\pi}^n_s}\bm{v}^*_{\infty}\}(s)-\displaystyle\inf_{\mathbb{P}\in \mathcal{G}_s}\{\mathcal{L}_{\mathbb{P}}^{\bm{\pi}^n_s}\bm{v}^n\}(s)\Big)
+\Big(\displaystyle\inf_{\mathbb{P}\in \mathcal{G}_s}\{\mathcal{L}_{\mathbb{P}}^{\bm{\pi}^n_s}\bm{v}^n\}(s)-\displaystyle\inf_{\mathbb{P}\in \mathcal{G}_s}\{\mathcal{L}_{\mathbb{P}}^{\hat{\bm{\pi}}_s}\bm{v}^*_{\infty}\}(s)\Big)\\
\geq\Big(\displaystyle\inf_{\mathbb{P}\in \mathcal{G}_s}\{\mathcal{L}_{\mathbb{P}}^{\bm{\pi}^n_s}\bm{v}^*_{\infty}\}(s)-\displaystyle\inf_{\mathbb{P}\in \mathcal{G}_s}\{\mathcal{L}_{\mathbb{P}}^{\bm{\pi}^n_s}\bm{v}^n\}(s)\Big)
+\Big(\displaystyle\inf_{\mathbb{P}\in \mathcal{G}_s}\{\mathcal{L}_{\mathbb{P}}^{\bm{\pi}^n_s}\bm{v}^n\}(s)-\displaystyle\inf_{\mathbb{P}\in \mathcal{G}_s}\{\mathcal{L}_{\mathbb{P}}^{\bm{\pi}^n_s}\bm{v}^*_{\infty}\}(s)\Big)\\
\geq-2\gamma\|\bm{v}^n-\bm{v}^*_{\infty}\|_{\infty}.
\end{array}
\end{equation}
Let $\mathbb{P}^*=\arginf_{\mathbb{P}\in \mathcal{G}_s}\mathcal{L}_{\mathbb{P}}^{\hat{\bm{\pi}}_s}\bm{v}^*_{\infty}(s)$. By  Step 2 in the proof of \cref{theorem:S robust}, we have
\begin{equation}\label{equ2}
\begin{aligned}
\lim\limits_{n\rightarrow\infty}\inf_{\mathbb{P}\in \mathcal{G}_s}\{\mathcal{L}_{\mathbb{P}}^{\bm{\pi}^n_s}\bm{v}^*_{\infty}\}(s) \leq \lim\limits_{n\rightarrow\infty}\{\mathcal{L}_{\mathbb{P}^*}^{\bm{\pi}^n_s}\bm{v}^*_{\infty}\}(s) = \{\mathcal{L}_{\mathbb{P}^*}^{\bm{\pi}^*_s}\bm{v}^*_{\infty}\}(s) = \inf_{\mathbb{P}'\in \mathcal{G}_s}\{\mathcal{L}_{\mathbb{P}'}^{\bm{\pi}^*_s}\bm{v}^*_{\infty}\}(s);
\end{aligned}
\end{equation}
here the first equality holds since $\{\mathcal{L}_{\mathbb{P}}^{\bm{\pi}_s}\bm{v}\}(s)$ is continuous on $\bm{\pi}_s$ and $\bm{\pi}_s^n\rightarrow\bm{\pi}_s^*$, and the second equality holds due to the definition of $\mathbb{P}^*$. Combining~\cref{equ1} and~\cref{equ2}, and noting that $\bm{v}^n\rightarrow\bm{v}^*_{\infty}$ due to \cref{lem3}, we arrive at
$$\inf_{\mathbb{P}'\in \mathcal{G}_s}\{\mathcal{L}_{\mathbb{P}'}^{\bm{\pi}^*_s}\bm{v}^*_{\infty}\}(s)\geq\inf_{\mathbb{P}\in \mathcal{G}_s}\{\mathcal{L}_{\mathbb{P}}^{\hat{\bm{\pi}}_s}\bm{v}^*_{\infty}\}(s),$$
which concludes that $\bm{\pi}^*_s$ is a $\mathcal{S}$-robust randomized action of state $s$.
\end{proof}
%\hfill \Halmos
%\endproof

\subsection{Non-Stationary Model}
The non-stationary model treats the system as having infinitely many states, each visited at most once. Therefore, we consider an equivalent MDP with an augmented state space, where each augmented state is defined by a pair $(s,t)$ where $s\in \mathcal{S}$ and $t\in\{1,2,\dots\}$, meaning state $s$ in the $ t $-th horizon. We define the non-stationary ambiguity set as the Cartesian product of the admissible distributions of each (augmented) state. That is,
$$\bar{\mathcal{G}}^{\infty}_\mathcal{S}\triangleq \left\{\mathbb{P}~\left\vert~\mathbb{P}=\bigotimes_{s\in \mathcal{S},t=1,2,\dots}\mathbb{P}_{s,t}, ~\mathbb{P}_{s,t}\in \mathcal{G}_s ~\forall s\in \mathcal{S},~t=1,2,\dots \right.\right\}.$$
This model is attractive as one can solve the corresponding MDP using the robust dynamic programming algorithm (see \cite{Iyengar_2005} and \cite{Nilim_Ghaoui_2005}).

The following theorems show that the $\mathcal{S}$-robust strategy is distributionally robust for the non-stationary  model. Indeed, the result is similar to Theorem~4.1 of \cite{Xu_Mannor_2012}. 
\begin{theorem}\label{thm3}
Under \cref{assu_mdp}, any $\mathcal{S}$-robust strategy is distributionally robust with respect to $\bar{\mathcal{G}}_{\mathcal{S}}^{\infty}$.
\end{theorem}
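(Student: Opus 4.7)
The plan is to reduce the infinite-horizon distributionally robust MDP to a classical s-rectangular robust MDP (in line with Step~1 of the proof of \cref{theorem:S robust}) and then to pass from the finite-horizon result of \cref{theorem:S robust} to the infinite-horizon setting by truncation, exploiting the $\gamma$-contraction of \cref{lem3}. This closely mirrors the strategy of Theorem~4.1 in \cite{Xu_Mannor_2012}.

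First, Lemma~3.2 of \cite{Xu_Mannor_2012} yields $w(\bm{\pi},\mathbb{P},s_1)=u(\bm{\pi},\bmb{p},\bmb{r},s_1)$ with $\bmb{p}_{s,t}=\mathbb{E}_{\mathbb{P}}[\tilde{\bm{p}}_{s,t}]$ and $\bmb{r}_{s,t}=\mathbb{E}_{\mathbb{P}}[\tilde{\bm{r}}_{s,t}]$. Because $\bar{\mathcal{G}}^{\infty}_\mathcal{S}$ is the Cartesian product of $\mathcal{G}_s$ across augmented states $(s,t)$, and each $\mathcal{Z}_s=\{\mathbb{E}_{\mathbb{P}}[(\tilde{\bm{p}}_s,\tilde{\bm{r}}_s)]\mid\mathbb{P}\in\mathcal{G}_s\}$ is convex and compact (Step~2 of the proof of \cref{theorem:S robust}),
$$\inf_{\mathbb{P}\in\bar{\mathcal{G}}^{\infty}_\mathcal{S}} w(\bm{\pi},\mathbb{P},s_1)=\inf_{(\bm{p},\bm{r})\in\bigotimes_{s\in\mathcal{S},t\ge 1}\mathcal{Z}_s} u(\bm{\pi},\bm{p},\bm{r},s_1).$$
Hence the distributionally robust problem coincides with a classical non-stationary s-rectangular robust MDP, for which a saddle point exists by \cite{Iyengar_2005,Nilim_Ghaoui_2005}.

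Second, for each horizon $T<\infty$, let $w_T$ denote the $T$-truncated expected discounted reward (with terminal reward zero) and let $v^T_1(s_1)$ be the corresponding $\mathcal{S}$-robust value; \cref{theorem:S robust} yields $v^T_1(s_1)=\max_{\bm{\pi}}\inf_{\mathbb{P}}w_T(\bm{\pi},\mathbb{P},s_1)$. By \cref{lem3}, $\mathcal{L}$ is a $\gamma$-contraction, so $v^T_1(s_1)\to v^*_\infty(s_1)$ at the exponential rate $\gamma^T$. Because rewards are bounded by some $R$, the tail satisfies $|w-w_T|\le R\gamma^T/(1-\gamma)$ uniformly in $\bm{\pi}$ and $\mathbb{P}$, so
$$\max_{\bm{\pi}\in\Pi^{\textnormal{HR}}}\inf_{\mathbb{P}\in\bar{\mathcal{G}}^{\infty}_\mathcal{S}} w(\bm{\pi},\mathbb{P},s_1)=v^*_\infty(s_1).$$
To see that the stationary $\mathcal{S}$-robust strategy $\bm{\pi}^*$ of \cref{definition4} attains this value, use s-rectangularity to assemble $\mathbb{P}^*=\bigotimes_{s\in\mathcal{S},t\ge 1}\mathbb{P}^*_{s,t}\in\bar{\mathcal{G}}^{\infty}_\mathcal{S}$ by taking each $\mathbb{P}^*_{s,t}\in\mathcal{G}_s$ to be a minimizer of the one-stage distributionally robust expectation defining $v^*_\infty(s)$; unrolling the fixed-point relation $\mathcal{L}\bm{v}^*_\infty=\bm{v}^*_\infty$ then gives $w(\bm{\pi}^*,\mathbb{P}^*,s_1)=v^*_\infty(s_1)$.

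The principal obstacle is this final construction: one must justify that the infimum over the product set $\bar{\mathcal{G}}^{\infty}_\mathcal{S}$ is attained stage-wise and interchanges with the infinite sum defining $w$. This is underwritten by s-rectangularity (which preserves independence across $(s,t)$), attainability of each state-wise infimum (a consequence of \cref{assu_slater} and the compactness established in Step~2 of the proof of \cref{theorem:S robust}), and the uniform exponential tail bound $R\gamma^T/(1-\gamma)$, which together support an $\epsilon$-optimal truncation followed by a dominated-convergence-style passage to the limit.
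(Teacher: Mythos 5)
Your overall architecture --- reduce to a classical s-rectangular robust MDP via Lemma~3.2 of \cite{Xu_Mannor_2012}, then pass from \cref{theorem:S robust} to the infinite horizon by truncation and the uniform tail bound --- is the same as the paper's. But your final step, where you argue that the stationary $\mathcal{S}$-robust strategy $\bm{\pi}^*$ attains the optimal value, has a direction-of-inequality gap. Exhibiting one adversary $\mathbb{P}^*=\bigotimes_{s,t}\mathbb{P}^*_{s,t}$ with $w(\bm{\pi}^*,\mathbb{P}^*,s_1)=v^*_\infty(s_1)$ only shows $\inf_{\mathbb{P}\in\bar{\mathcal{G}}^\infty_\mathcal{S}}w(\bm{\pi}^*,\mathbb{P},s_1)\le v^*_\infty(s_1)$, which is already implied by your identity $\max_{\bm{\pi}}\inf_{\mathbb{P}}w=v^*_\infty(s_1)$ and contributes nothing toward optimality of $\bm{\pi}^*$. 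What you actually need is the reverse bound: $w(\bm{\pi}^*,\mathbb{P},s_1)\ge v^*_\infty(s_1)-O(\gamma^{\widehat T})$ for \emph{every} $\mathbb{P}\in\bar{\mathcal{G}}^\infty_\mathcal{S}$, i.e., that $\bm{\pi}^*$ \emph{guarantees} the value against all admissible adversaries. The paper obtains exactly this by applying \cref{theorem:S robust} to the $\widehat T$-truncated problem, which asserts that the $\mathcal{S}$-robust strategy is distributionally robust (a max-min optimizer, hence worst-case optimal, not merely optimal against one distribution) for each truncation; this is inequality~\cref{eq:thm3-2}. The "principal obstacle" you flag --- interchanging the infimum with the infinite sum --- is not the real issue; the real issue is this missing uniform lower bound.

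A second, smaller problem compounds the first: you truncate with \emph{zero} terminal reward, so the finite-horizon $\mathcal{S}$-robust strategy produced by \cref{theorem:S robust} for the truncated problem is the nonstationary backward-induction policy built from $v^T_{t}$, not the stationary $\bm{\pi}^*$ built from $v^*_\infty$. You would then need an extra approximation argument (bounding the loss of playing $\bm{\pi}^*$ against the values $v^T_{t}$ instead of $v^*_\infty$, using the contraction) before the finite-horizon optimality transfers to $\bm{\pi}^*$. The paper avoids this entirely by defining the truncated objective $u_{\widehat T}$ with terminal reward $\gamma^{\widehat T}v_\infty(\cdot)$: the fixed-point relation $\mathcal{L}\bm{v}^*_\infty=\bm{v}^*_\infty$ then makes every stage of the truncated backward induction reproduce $v_\infty$, so the truncated $\mathcal{S}$-robust strategy \emph{is} the stationary one, and the tail bound $2\gamma^{\widehat T}r_{\max}$ closes the argument directly. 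Both gaps are fixable with the paper's devices, but as written your proof does not establish the theorem.
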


\begin{proof}[Proof]
%\noindent{\emph{Proof of Theorem~\ref{thm3}.}} $\;$
First, we introduce the $\widehat{T}$-truncated problem below:
$$u_{\widehat{T}}(\bm{\pi},\bm{p},\bm{r},s_1)\triangleq \mathbb{E}_{\mathbb{Q}(\bm{\pi})}\left[\sum_{t=1}^{\widehat{T}}\gamma^{t-1}r(s_t,a_t)+\gamma^{\widehat{T}}v_\infty\left(\tilde{s}_{\widehat{T}}\right)\right],$$
which stops at stage $\widehat{T}$ with a terminal reward $v_\infty(\cdot).$ Since $|\mathcal{S}|$ is finite and all $\mathcal{D}_n$, $n \in [N_s]$ and $\mathcal{W}$ are bounded, there exists a universal constant $r_{\max}\triangleq\max_{{s,a}}|r(s,a)|$ independent of $\widehat{T}$ such that for any $(\bm{\pi},\bm{p},\bm{r})$ where the uncertain parameters $(\bm{p},\bm{r})$ obey a joint probability distribution $\mathbb{P}\in\mathcal{F_S}$, the inequality 
$\left\vert u_{\widehat{T}}(\bm{\pi},\bm{p},\bm{r},s_1)-u(\bm{\pi},\bm{p},\bm{r},s_1)\right\vert \leq\gamma^{\widehat{T}}r_{\max} $ holds. This implies that for any $\mathbb{P}\in\bar{\mathcal{G}}^\infty_{\mathcal{S}}$,
\begin{equation}\label{eq:thm3-1-1}
\left\vert\int u_{\widehat{T}}(\bm{\pi},\bm{p},\bm{r},s_1){\rm d}\mathbb{P}(\bm{p},\bm{r})-\int u(\bm{\pi},\bm{p},\bm{r},s_1){\rm d}\mathbb{P}(\bm{p},\bm{r})\right\vert\leq\gamma^{\widehat{T}}r_{\max},
\end{equation}
and moreover,
\begin{equation}\label{eq:thm3-1}
\left\vert \inf_{\mathbb{P}\in\bar{\mathcal{G}}_{\mathcal{S}}^\infty}\int u_{\widehat{T}}(\bm{\pi},\bm{p},\bm{r},s_1){\rm d}\mathbb{P}(\bm{p},\bm{r})-\inf_{\mathbb{P'}\in\bar{\mathcal{G}}_{\mathcal{S}}^\infty}\int u(\bm{\pi},\bm{p},\bm{r},s_1){\rm d}\mathbb{P'}(\bm{p},\bm{r})\right\vert \leq\gamma^{\widehat{T}}r_{\max}.
\end{equation}
By \cref{theorem:S robust}, an $\mathcal{S}$-robust strategy $\bm{\pi}^*$ is a distributionally robust strategy of the finite horizon $\widehat{T}$ truncated problem for any $\widehat{T}\geq1$. Thus, we have
\begin{equation}\label{eq:thm3-2}
\inf_{\mathbb{P}\in\bar{\mathcal{G}}_{\mathcal{S}}^\infty}\int u_{\widehat{T}}(\bm{\pi}^*,\bm{p},\bm{r},s_1){\rm d}\mathbb{P}(\bm{p},\bm{r})\geq\inf_{\mathbb{P}'\in\bar{\mathcal{G}}_{\mathcal{S}}^\infty}\int u_{\widehat{T}}(\bm{\pi}',\bm{p},\bm{r},s_1){\rm d}\mathbb{P}'(\bm{p},\bm{r}) ~~\forall\pi'\in\Pi^{\text{HR}}.
\end{equation}
Combining \cref{eq:thm3-1} and \cref{eq:thm3-2} yields
\begin{equation*}
\begin{array}{ll}
&\inf\limits_{\mathbb{P}\in\bar{\mathcal{G}}_{\mathcal{S}}^\infty}\int u(\bm{\pi}^*,\bm{p},\bm{r},s_1){\rm d}\mathbb{P}(\bm{p},\bm{r})\\
\geq&\inf\limits_{\mathbb{P}'\in\bar{\mathcal{G}}_{\mathcal{S}}^\infty}\int u(\bm{\pi}',\bm{p},\bm{r},s_1){\rm d}\mathbb{P}'(\bm{p},\bm{r})-2\gamma^{\widehat{T}}r_{\max} ~~\forall\pi'\in\Pi^{\text{HR}}.
\end{array}
\end{equation*}
As the above inequality holds for an arbitrary $\widehat{T},$ we thus conclude that an $\mathcal{S}$-robust strategy $\bm{\pi}^*$ is also a distributionally robust strategy with respect to $\bar{\mathcal{G}}_{\mathcal{S}}$ of the infinite horizon distributionally robust MDP, which completes our proof. 
\end{proof}
%\hfill \Halmos
%\endproof

\subsection{Stationary Model}
The stationary model treats the system as having a finite number of states, while multiple visits to one state is allowed. That is, if a state $s$ is visited for multiple times, then each time the distribution (of uncertain parameters) $\mathbb{P}_s$ is the same. For this model, it is much easier to develop statistically accurate sets of confidence when the underlying process is time invariant (see \cite{Nilim_Ghaoui_2005}). We define the stationary ambiguity set of admissible distributions as
$$\bar{\mathcal{G}}_\mathcal{S}\triangleq \left\{\mathbb{P}~\left\vert~\mathbb{P}=\bigotimes_{s\in \mathcal{S},t=1,2,\dots}\mathbb{P}_{s,t}, ~\mathbb{P}_{s,t}=\mathbb{P}_{s},~\mathbb{P}_{s}\in \mathcal{G}_s ~\forall s\in \mathcal{S}, ~t=1,2,\dots \right. \right\}.$$
Similar to the non-stationary model, we have the following result.
\begin{theorem}\label{thm4}
Under \cref{assu_mdp}, any $\mathcal{S}$-robust strategy is distributionally robust with respect to $\bar{\mathcal{G}}_{\mathcal{S}}$.
\end{theorem}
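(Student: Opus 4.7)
The plan is to exhibit a stationary worst-case distribution $\mathbb{P}^* \in \bar{\mathcal{G}}_{\mathcal{S}}$ such that $(\bm{\pi}^*, \mathbb{P}^*)$ forms a saddle point of $\max_{\bm{\pi}} \inf_{\mathbb{P} \in \bar{\mathcal{G}}_{\mathcal{S}}} w(\bm{\pi}, \mathbb{P}, s_1)$. Once the saddle point is in hand, distributional robustness follows exactly as in \cref{theorem:S robust}: for every $\bm{\pi}' \in \Pi^{\text{HR}}$,
\begin{equation*}
\inf_{\mathbb{P} \in \bar{\mathcal{G}}_{\mathcal{S}}} w(\bm{\pi}^*, \mathbb{P}, s_1) = w(\bm{\pi}^*, \mathbb{P}^*, s_1) \geq w(\bm{\pi}', \mathbb{P}^*, s_1) \geq \inf_{\mathbb{P}' \in \bar{\mathcal{G}}_{\mathcal{S}}} w(\bm{\pi}', \mathbb{P}', s_1).
\end{equation*}
I prefer this direct route over adapting the truncation argument of \cref{thm3}, because the stationarity restriction $\mathbb{P}_{s,t} = \mathbb{P}_s$ cannot be handled simply by invoking the finite-horizon \cref{theorem:S robust} on truncated problems.

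To construct $\mathbb{P}^*$, I apply Step~2 of the proof of \cref{theorem:S robust} state-by-state. For each $s \in \mathcal{S}$, the image set $\mathcal{Z}_s = \{\ep{(\bmt{p}_s, \bmt{r}_s)} \mid \mathbb{P} \in \mathcal{G}_s\}$ is convex and compact, and the one-stage payoff $\bar{\bm{r}}_s^\top \bm{\pi}_s + \gamma \bar{\bm{p}}_s^\top \bm{V}_s \bm{\pi}_s$ (with $\bm{V}_s$ assembled from $\bm{v}_\infty$ as in \cref{definition4}) is bilinear in $(\bar{\bm{p}}_s, \bar{\bm{r}}_s)$ and $\bm{\pi}_s$. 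Sion's minimax theorem therefore produces a saddle pair $(\bm{\pi}_s^*, \mathbb{P}_s^*)$ achieving $v_\infty(s)$, with $\mathbb{P}_s^* \in \mathcal{G}_s$. Setting $\mathbb{P}^* \triangleq \bigotimes_{s \in \mathcal{S},\, t \geq 1} \mathbb{P}_s^*$ places $\mathbb{P}^*$ in $\bar{\mathcal{G}}_{\mathcal{S}}$ by construction; let $(\bar{\bm{p}}^*, \bar{\bm{r}}^*)$ denote its induced mean parameters. Invoking Lemma~3.2 of \cite{Xu_Mannor_2012} rewrites $w(\bm{\pi}, \mathbb{P}, s_1) = u(\bm{\pi}, \bar{\bm{p}}, \bar{\bm{r}}, s_1)$ for every $\mathbb{P} \in \bar{\mathcal{G}}_{\mathcal{S}}$ with mean parameters $(\bar{\bm{p}}, \bar{\bm{r}})$, reducing the saddle-point check to two Bellman arguments. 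First, the state-wise identity $v_\infty(s) = \max_{\bm{\pi}_s} (\bar{\bm{r}}_s^{*\top} \bm{\pi}_s + \gamma \bar{\bm{p}}_s^{*\top} \bm{V}_s \bm{\pi}_s)$ is precisely the Bellman optimality equation of the nominal MDP with parameters $(\bar{\bm{p}}^*, \bar{\bm{r}}^*)$, forcing $\bm{v}_\infty$ to be its optimal value function and yielding $\max_{\bm{\pi}} w(\bm{\pi}, \mathbb{P}^*, s_1) = v_\infty(s_1)$. Second, for any $\mathbb{P} \in \bar{\mathcal{G}}_{\mathcal{S}}$ with mean parameters $(\bar{\bm{p}}, \bar{\bm{r}})$, the state-wise saddle inequality $\bar{\bm{r}}_s^\top \bm{\pi}_s^* + \gamma \bar{\bm{p}}_s^\top \bm{V}_s \bm{\pi}_s^* \geq v_\infty(s)$ reads componentwise as $T^{\bm{\pi}^*}_{(\bar{\bm{p}}, \bar{\bm{r}})} \bm{v}_\infty \geq \bm{v}_\infty$, where $T^{\bm{\pi}^*}_{(\bar{\bm{p}}, \bar{\bm{r}})}$ is the policy-evaluation Bellman operator under fixed stationary parameters; monotone iteration of this $\gamma$-contraction (in the spirit of \cref{lem3}) propagates the inequality to its unique fixed point, giving $u(\bm{\pi}^*, \bar{\bm{p}}, \bar{\bm{r}}, s_1) \geq v_\infty(s_1) = w(\bm{\pi}^*, \mathbb{P}^*, s_1)$.

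The main hurdle is this second half of the saddle-point check: the state-wise saddle property of $\mathbb{P}_s^*$ only controls a one-step Bellman residual weighted by $\bm{v}_\infty$, whereas I need a bound on the infinite-horizon value $u(\bm{\pi}^*, \bar{\bm{p}}, \bar{\bm{r}}, s_1)$ under a stationary mean $(\bar{\bm{p}}, \bar{\bm{r}})$ that may differ from $(\bar{\bm{p}}^*, \bar{\bm{r}}^*)$. Lifting the one-shot inequality to the infinite-horizon bound relies crucially on both the monotone $\gamma$-contraction of $T^{\bm{\pi}^*}_{(\bar{\bm{p}}, \bar{\bm{r}})}$ and the time-invariance of $(\bar{\bm{p}}, \bar{\bm{r}})$, which is precisely why the argument succeeds for $\bar{\mathcal{G}}_{\mathcal{S}}$ but would break down for the non-stationary set $\bar{\mathcal{G}}_{\mathcal{S}}^{\infty}$ of \cref{thm3}. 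A secondary technicality is confirming that Lemma~3.2 of \cite{Xu_Mannor_2012} extends to the discounted infinite-horizon regime, which is routine: the discount yields absolute summability and Fubini justifies the interchange of infinite sum and expectation.
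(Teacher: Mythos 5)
Your proof is correct, and it reaches the conclusion by a genuinely different route from the paper's. The paper proves \cref{thm4} by truncation: it considers the $\widehat{T}$-truncated problem with terminal reward $v_\infty(\cdot)$, invokes the finite-horizon saddle point of \cref{theorem:S robust} to obtain state-wise optimal pairs $(\bm{\pi}^*_{s,t},\mathbb{P}^*_{s,t})$, observes that these are automatically independent of $t$ (they are all defined relative to the same fixed point $\bm{v}_\infty$) so that the product measure lies in $\bar{\mathcal{G}}_\mathcal{S}$, sandwiches the infimum over $\bar{\mathcal{G}}_\mathcal{S}$ against the infimum over the larger set $\bar{\mathcal{G}}^\infty_\mathcal{S}$ at the cost of a $2\gamma^{\widehat{T}}r_{\max}$ error, and lets $\widehat{T}\to\infty$. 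You instead verify the infinite-horizon saddle point directly at $\bm{v}_\infty$: the state-wise minimax equality (Sion, on the convex compact $\mathcal{Z}_s$) shows that $\bm{v}_\infty$ solves the Bellman optimality equation of the nominal MDP with parameters $(\bar{\bm{p}}^*,\bar{\bm{r}}^*)$, which gives $\max_{\bm{\pi}}w(\bm{\pi},\mathbb{P}^*,s_1)=v_\infty(s_1)$, while the monotone $\gamma$-contraction of the policy-evaluation operator lifts the one-step saddle inequality $T^{\bm{\pi}^*}_{(\bar{\bm{p}},\bar{\bm{r}})}\bm{v}_\infty\geq\bm{v}_\infty$ to $u(\bm{\pi}^*,\bar{\bm{p}},\bar{\bm{r}},s_1)\geq v_\infty(s_1)$ for every stationary mean. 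Your version avoids the $\gamma^{\widehat{T}}$ bookkeeping and makes explicit exactly where stationarity of the adversary is used (the fixed-point propagation would fail for time-varying means, i.e., for $\bar{\mathcal{G}}^\infty_\mathcal{S}$, where the truncation argument of \cref{thm3} is the natural tool); the paper's version reuses the finite-horizon theorem as a black box and shows as a by-product that the stationary and non-stationary models share the same value in the limit. Two minor points to tighten: (i) since \cref{definition1} maximizes over $\Pi^{\rm HR}$, your first Bellman step should cite the standard fact that history-dependent randomized policies cannot outperform the optimal stationary policy of a nominal discounted MDP (e.g., Theorem 6.2.10 in \cite{Puterman_2014}); and (ii) because the theorem asserts the claim for \emph{any} $\mathcal{S}$-robust strategy, you should note that an arbitrary maximizer $\bm{\pi}^*_s$ of the outer max, paired with any minimizer of the swapped min-max problem, still forms a saddle pair---which is exactly what the minimax equality guarantees.
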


\begin{proof}[Proof]
%\noindent{\emph{Proof of Theorem~\ref{thm4}.}} $\;$ 
Similar to the proof of \cref{thm3}, we consider the $\widehat{T}$ truncated problem. From the proof
of \cref{theorem:S robust}, for each $s\in\mathcal{S}$ and $t\leq\widehat{T}$, let $\bm{\pi}^*_{s,t}$ and $(\bm{p}^*_{s,t},\bm{r}^*_{s,t})$ be the optimal solution to the problem
$$
\max_{\bm{\pi}_s \in \mathcal{P}(\mathcal{A}_s)} \inf_{\mathbb{P} \in \mathcal{G}_s} \ep{\sum_{a \in \mathcal{A}_s} \pi_s(a) \left(\tilde{r}(s, a) + \gamma\sum_{s' \in \mathcal{S}} \tilde{p}(s' \vert s, a) v_{\infty}(s')\right)}
$$
and let $\mathbb{P}_{s,t}^*\in\mathcal{G}_s$ satisfy $\mathbb{E}_{\mathbb{P}_{s,t}^*}=(\bm{p}^*_{s,t},\bm{r}^*_{s,t})$. We have  $\bm{\pi}^*=\bigotimes_{s\in\mathcal{S},t\leq\widehat{T}}\pi^*_{s,t}$ and $\mathbb{P}^*=\bigotimes_{s\in\mathcal{S},t\leq\widehat{T}}\mathbb{P}^*_{s,t}$ and they satisfy
\begin{equation*}
\begin{array}{ll}
\displaystyle \max_{\pi\in\Pi^{\text{HR}}}\int u_{\widehat{T}}(\bm{\pi},\bm{p},\bm{r},s_1){\rm d}\mathbb{P}^*(\bm{p},\bm{r})&=\displaystyle\int u_{\widehat{T}}(\bm{\pi}^*,\bm{p},\bm{r},s_1){\rm d}\mathbb{P}^*(\bm{p},\bm{r})\\
&=\displaystyle\inf_{\mathbb{P}\in\bar{\mathcal{G}}_{\mathcal{S}}^\infty}\int u_{\widehat{T}}(\bm{\pi}^*,\bm{p},\bm{r},s_1){\rm d}\mathbb{P}(\bm{p},\bm{r}),
\end{array}
\end{equation*}
which leads to
\begin{equation*}
\begin{aligned}
\max_{\pi\in\Pi^{\text{HR}}}\int u(\bm{\pi},\bm{p},\bm{r},s_1){\rm d}\mathbb{P}^*(\bm{p},\bm{r})&\leq\inf_{\mathbb{P}\in\bar{\mathcal{G}}_{\mathcal{S}}^\infty}\int u(\bm{\pi}^*,\bm{p},\bm{r},s_1){\rm d}\mathbb{P}(\bm{p},\bm{r})+2\gamma^{\widehat{T}}r_{\max}\\
&\leq\inf_{\mathbb{P}\in\bar{\mathcal{G}}_{\mathcal{S}}}\int u(\bm{\pi}^*,\bm{p},\bm{r},s_1){\rm d}\mathbb{P}(\bm{p},\bm{r})+2\gamma^{\widehat{T}}r_{\max}.
\end{aligned}
\end{equation*}
Here the first inequality holds due to \cref{eq:thm3-1-1}, and the second inequality holds because $\bar{\mathcal{G}}_{\mathcal{S}}\subseteq\bar{\mathcal{G}}^\infty_{\mathcal{S}}$.

Note that by construction, $\bm{\pi}^*$ can be any $\mathcal{S}$-robust strategy. Moreover, $\bm{\pi}^*_{s,t}$ and $\bm{\mathbb{P}}^*_{s,t}$ are stationary, i.e., they do not depend on $t$. Hence, we have $\mathbb{P}^*\in\bar{\mathcal{G}}_{\mathcal{S}}$. Therefore,
\begin{equation}\label{eq:thm3-3}
\max_{\bm{\pi}\in \Pi^{\text{HR}}}\inf_{\mathbb{P}\in\bar{\mathcal{G}}_{\mathcal{S}}}\int u(\bm{\pi},\bm{p},\bm{r},s_1){\rm d}\mathbb{P}(\bm{p},\bm{r})\leq\max_{\bm{\pi}\in \Pi^{\text{HR}}}\int u(\bm{\pi},\bm{p},\bm{r},s_1){\rm d}\mathbb{P}^*(\bm{p},\bm{r}).
\end{equation}
Combining~\cref{eq:thm3-2} and~\cref{eq:thm3-3} leads to
\begin{equation*}
\max_{\bm{\pi}\in \Pi^{\text{HR}}}\inf_{\mathbb{P}\in\bar{\mathcal{G}}_{\mathcal{S}}}\int u(\bm{\pi},\bm{p},\bm{r},s_1){\rm d}\mathbb{P}(\bm{p},\bm{r})\leq\inf_{\mathbb{P}\in\bar{\mathcal{G}}_{\mathcal{S}}}\int u(\bm{\pi}^*,\bm{p},\bm{r},s_1){\rm d}\mathbb{P}(\bm{p},\bm{r})+2\gamma^{\widehat{T}}r_{\max}.
\end{equation*}
Since $\widehat{T}$ can be arbitrarily large, this concludes that any $\mathcal{S}$-robust strategy $\bm{\pi}^*$ is also a distributionally robust strategy with respect to $\bar{\mathcal{G}}_{\mathcal{S}}$. 
\end{proof}
%\hfill \Halmos
%\endproof

\begin{remark}
The worst-case expected performance of the non-stationary model provides a lower bound to that of the stationary model since $\bar{\mathcal{G}}_\mathcal{S}\subseteq\bar{\mathcal{G}}^\infty_{\mathcal{S}}$. Therefore, we can use the non-stationary model to approximate the stationary model, when the latter is intractable in the finite horizon case. When horizon approaches infinity, such approximation becomes exact as shown in the proofs of \cref{thm3} and \cref{thm4}. In particular, the optimal solutions to both formulations coincide and they can be computed by iteratively solving a minimax problem. 
\end{remark}

\section{Numerical Experiments}\label{section6}
We consider a dynamic newsvendor problem over the finite decision horizon $t=1,\dots, T.$ At each period $t$, the newsvendor observes the current level of stock and makes an ordering decision to replenish the stock before the random demand is realized. For each period $ t $, let $d_t$ be the random demand observed, $a_t$ be the order quantity placed, and $s_t$ be the inventory position at the beginning of that period (negative $s_t$ is understood as backorder). There are limits on the inventory position---that is, $s_t \in [s_{\min},s_{\max}]$, where $-s_{\min}$ is the largest allowable backorder and $s_{\max}$ is the maximal number of units we can keep. The inventory position evolves according to the dynamics:
\begin{equation*}
s_{t+1}=\phi(s_t+a_t-d_t) ~~\forall t=1,\dots,T-1,
\end{equation*} 
where the initial inventory $ s_1 = 0 $ and 
$$
\phi(s) = \left\{
\begin{array}{ll}
\max\{s, s_{min}\} &~{\rm if~} s \leq 0\\
\min\{s,s_{\max}\} &~{\rm otherwise.}
\end{array}
\right.
$$
Let $c_t$, $h_t$ and $b_t$ be the unit order, holding, and backorder cost at time $t=1,\dots, T $, respectively. Given the inventory position $ s_t $ and the ordering decision $a_t$, the cost in each period $t$ is $c_t a_t+\max\{h_t s_t,-b_t s_t\}$
and the cost in the terminal period $T$ is $\max\{h_Ts_T, -b_Ts_T\}.$ Note that the periodic cost only depends on the current state $ s_t $ and the current ordering decision $a_t$.

We assume that the realizations of demand $\{d_t\}_{t=1}^{T-1}$ and actions $\{a_t\}_{t=1}^{T-1}$ are all integer-valued. Hence we can formulate the problem as a finite-state and finite-action MDP, where the 
state space is given by $\mathcal{S} \triangleq \{s_{\min}, s_{\min}+1, \dots, s_{\max}-1, s_{\max} \}$ and the set of admissible actions in state $s$ is given by $\mathcal{A}_s = \{0,1,\dots, s_{\max} - s\} $. We assume for ease of exposition that the true distribution $ \mathbb{P}_0 $ of the random demand is stationary such that $ \mathbb{P}_0[\tilde{d}_t = i] = p_i$. Solving the inventory problem using the classical dynamic programming approach requires the the perfect knowledge of the distribution of $\tilde{d}$, i.e., known $ p_{i} $. Such information is usually hard, if not impossible, to obtain in practice (see e.g., \cite{Bertsekas_2000, Sutton_Barto_2011}). Instead, we adopt the distributionally robust approach, which is favorable when we only have sparse observations of $p_i$. We assume that $ \bm{p} = (p_i)_i $ is uncertain and the distribution of $ \bm{p} $ is ambiguous but belongs to an ambiguity set. By \cref{def}, we compute the $\mathcal{S}$-robust order strategy for all $s\in \mathcal{S} $ and $ t = 1,\dots, T-1$ via solving
$$
\begin{array}{l}
v_t(s) = \\  \min\limits_{\bm{\pi}_s \in \mathcal{P}(\mathcal{A}_s)}\sup\limits_{\mathbb{P} \in \mathcal{G}_s} \mathbb{E}_\mathbb{P}\bigg[\sum\limits_{a_t \in \mathcal{A}_s}\pi_s(a_t)(c_t a_t+\max\{h_ts,-b_ts\}+\sum\limits_{i} v_{t+1}(\phi(s+a_t-i))\tilde{p}_i)\bigg].
\end{array}
$$
Here $ v_T(s) = \max\{h_Ts,-b_Ts\} $ and for all $ s $, $ \mathcal{G}_s$ is the same ambiguity set of $\bm{p}$.  

Particularly, we consider a more practical data-driven setting described as follows. The decision maker initially has access only to  $ N $ independent training samples $ \bm{p}_1^\dag, \dots, \bm{p}_N^\dag $ of the true data-generating distribution $\bm{p}$. The decision maker then obtains the empirical distribution $\mathbb{P}^\dag = \frac{1}{N}\sum_{i \in [N]}\delta_{\bm{p}^\dag_i} $ and afterwards, she solves the $ \mathcal{S}$-robust order strategy under the Wasserstein ambiguity set centered around $\mathbb{P}^\dag$ with different values of radius $\theta$. Another testing dataset that consists of $10000$ independent samples of $\{d_t\}_{t=1}^{T-1}$ is then generated from $ \bm{p} $ and the total cost of applying the $\mathcal{S}$-robust order strategy is simulated by averaging over these $10000$ runs. The out-of-sample performance looks at this stimulated total cost and is reported by running the experiment $2000$ times--- each time using a new generated training and testing datasets. The parameters we used are as follows: $ T = 5$, $ s_{\min} = -5 $, $ s_{\max} = 10 $, $ c_t = 1 $, $ h_t = 2 $, $ b_t = 3 $, $ \tilde{d} \in \{0,1,2,3,4\} $, and $ \bm{p} = (0.05,0.4,0.1,0.4,0.05) $ such that the true demand distribution has two modes at $ 1 $ and $ 3 $ and has an expectation of $ 2 $.

\begin{figure}[htbp]
\begin{subfigure}{.5\textwidth}
\begin{center}
\includegraphics[width=1\linewidth]{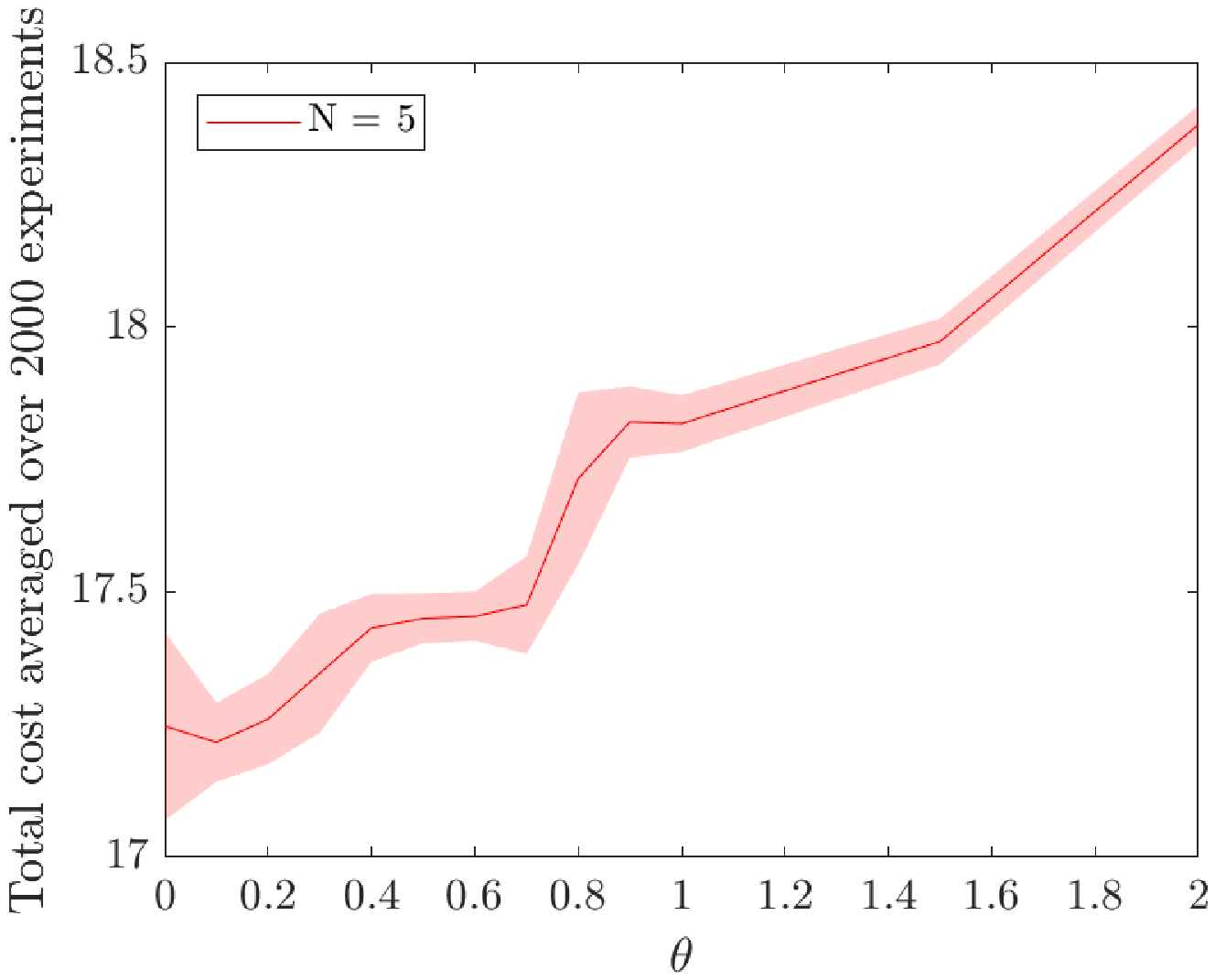}
%\caption{\footnotesize{Empirical distribution}}
\end{center}
\end{subfigure}%
\begin{subfigure}{.5\textwidth}
\begin{center}
\includegraphics[width=1\linewidth]{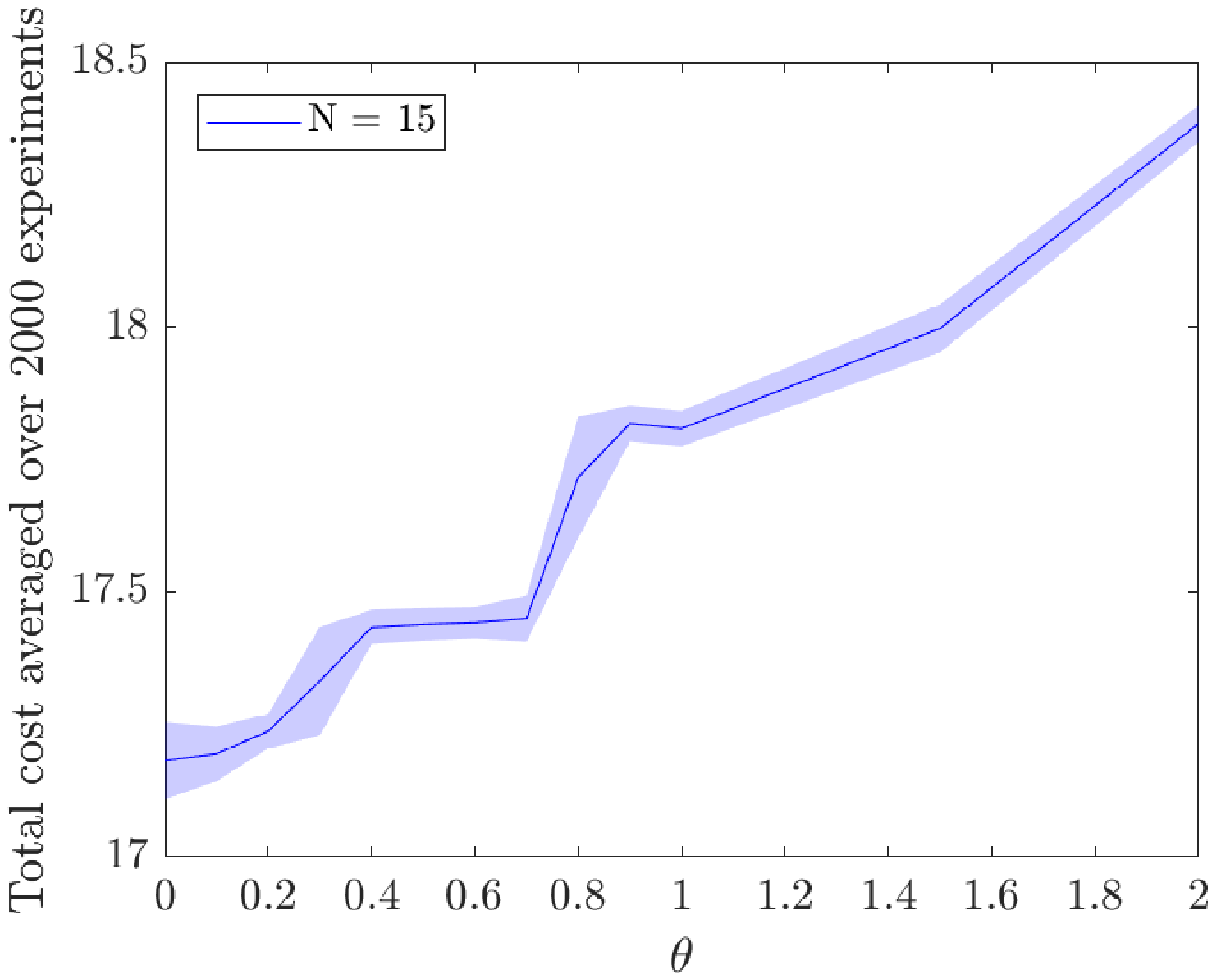}
%\caption{\footnotesize{Worst-case distribution}}
\end{center}
\end{subfigure}
\vspace{-10pt}
\caption{{\textnormal{Out-of-sample performance for various Wasserstein radii. The shaded regions cover range from `mean minus one standard deviation' to `mean plus one standard deviation' of $2000$ randomly generated experiments, whereas the solid lines describe the mean statistics.}} \label{figure::errorbars}}
\end{figure}

In \cref{figure::errorbars}, we report the distribution of the total cost for different Wasserstein radii over $2000$ experiments. Note that when $ \theta = 0 $, the Wasserstein approach recovers the sample average approach that considers the empirical MDP using $\mathbb{P}^\dag$ (see e.g., \cite{Haskell_Jain_Kalathil_2016}); when $ \theta \geq 2 $, the Wasserstein approach recovers the classical robust optimization approach (see e.g., \cite{Iyengar_2005, Nilim_Ghaoui_2005}). In general, a larger size of training samples leads to a better out-of-sample performance in terms of variability, see the comparison between $N=5$ and $N=15$. For most values of the radius, a larger size of training samples yields a better out-of-sample performance that has a slightly lower mean value and a significant lower standard deviation of the total cost averaged over $2000$ experiments. As the radius $\theta$ increases, the mean of the out-of-sample performance increases because the solution becomes more conservative while the standard deviation fluctuates. Recently, the work \cite{Gotoh_Kim_Lim_2015} studies the effect of the radius $\theta$ on the solution obtained from considering the $\phi$-divergence ambiguity set---another popular statistical-distance-based ambiguity set. The authors show the existence of a mean-standard deviation trade-off such that when the radius $\theta$ increases, the mean increases but the standard deviation would decrease. However, we do not observe such a trade-off in our experiment that considers the Wasserstein ambiguity set as some $\theta$ finds itself at less of an advantage than others (because of a higher mean and a larger standard deviation, see for example in the case of $N = 15$, $\theta \in \{0.3, 0.7, 0.8\}$ seems to be dominated by $\theta = 0.2 $). Therefore, it begs an interesting question as how to calibrate a good radius~$\theta$ for the Wasserstein ambiguity set.

\section{Conclusion}\label{section7}
In this paper, we address MDPs under parameter uncertainty following the distributionally robust approach, to mitigate the conservatism of the classical robust MDP framework which only considers the set-inclusive formulation of uncertainty, and to incorporate additional a priori probabilistic information regarding the unknown parameters. Specifically, we generalize existing works on distributionally robust MDPs by investigating a unified format of ambiguity sets that provides extra modeling power. We hope our analysis based on such a unified format would encourage study of distributionally robust MDPs with new types of ambiguity sets, including (i) hybridizations of generalize-moment-based and statistical-distance-based ambiguity sets and (ii) mixture-distribution-based ambiguity sets. Our solution approach leads to an optimal strategy that can be obtained through a Bellman type backward induction by solving a sequence of classical robust optimization problems, which can now be effectively modeled and solved by using algebraic modeling packages.

In our numerical studies, we present the out-of-sample performance of Wasserstein ambiguity sets with different radii. Calibrating the radius $\theta$ shall be an interesting problem---moving forward, we intend to apply resampling approaches such as cross-validation and bootstrap to calibrate $\theta$. Inspired by recent works in incorporating dependence structure about the uncertainty (see e.g., \cite{Gao_Kleywegt_2017_a, Gao_Kleywegt_2017_b}), we are also interested in integrating that information with the Wasserstein ambiguity set and study the performance in distributionally robust MDPs. In addition, we wish to adapt approximate dynamic programming techniques (see e.g., \cite{Powell_2007}), and develop scalable methods for distributionally robust MDPs with large or even continuous state and action spaces. 

%\section*{Acknowledgments}
%We would like to acknowledge the assistance of volunteers in putting
%together this example manuscript and supplement.

\bibliographystyle{siamplain}
\bibliography{references}
\end{document}